\definecolor{keywordsColor}{RGB}{97, 0, 71}
\definecolor{commentsColor}{RGB}{54, 54, 54}
\lstdefinelanguage{OurLanguage}{
    alsoletter={:,=, <, >, &, |},
    keywords={while, end, types, if, else, else:, and, or, not},
    morekeywords={=, <, >, <=, >=, ==, !=, &&, ||},
    basicstyle={\ttfamily\small\normalfont},
    keywordstyle={\color{keywordsColor}\ttfamily\bfseries},
    comment=[l]{\#},
    commentstyle={\color{commentsColor}\ttfamily},
    autogobble=true,
    mathescape=true
}
\lstdefinestyle{program}{basicstyle=\small\ttfamily,keywordstyle=\bfseries}
\newcommand{\cmark}{\ding{51}}
\newcommand{\xmark}{\ding{55}}
\newcommand{\Mora}{{\tt Mora}}
\newcommand{\Tool}{{\tt Polar}}
\newtheorem{definition}{Definition}
\newtheorem{theorem}{Theorem}
\newtheorem*{theorem*}{Theorem}
\newtheorem{lemma}[theorem]{Lemma}
\newtheorem*{lemma*}{Lemma}
\newtheorem{example}{Example}
\newtheorem*{remark}{Remark}
\newcommand{\R}{{\mathbb{R}}}
\newcommand{\Complex}{{\mathbb{C}}}
\newcommand{\N}{{\mathbb{N}}}
\newcommand{\Z}{{\mathbb{Z}}}
\newcommand{\Normal}{{\text{Normal}}}
\newcommand{\E}{{\mathbb{E}}} 
\renewcommand{\P}{{\mathbb{P}}} 
\newcommand{\Program}{{\mathcal{P}}}
\newcommand{\States}{\text{States}}
\newcommand{\Runs}{\text{Runs}}
\newcommand{\Vars}{\text{Vars}}
\newcommand{\supp}{\text{supp}}
\newcommand{\transmap}{\text{\ }\mapsto\text{\ }}
\begin{document}

\title[This Is the Moment for Probabilistic Loops]{This Is the Moment for Probabilistic Loops}         


\author{Marcel Moosbrugger}
\orcid{0000-0002-2006-3741}             
\affiliation{
  \institution{TU Wien}            
  \city{Vienna}
  \country{Austria}                    
}
\email{marcel.moosbrugger@tuwien.ac.at}          

\author{Miroslav Stankovi\v{c}}
\orcid{0000-0001-5978-7475}             
\affiliation{
  \institution{TU Wien}            
  \city{Vienna}
  \country{Austria}                    
}
\email{miroslav.stankovic@tuwien.ac.at}          

\author{Ezio Bartocci}
\orcid{0000-0002-8004-6601}             
\affiliation{
  \institution{TU Wien}            
  \city{Vienna}
  \country{Austria}                    
}
\email{ezio.bartocci@tuwien.ac.at}          

\author{Laura Kov\'acs}
\orcid{0000-0002-8299-2714}             
\affiliation{
  \institution{TU Wien}            
  \city{Vienna}
  \country{Austria}                    
}
\email{laura.kovacs@tuwien.ac.at}          

\begin{abstract}
We present a novel static analysis technique to derive higher moments for program variables for a large class of probabilistic loops with potentially uncountable state spaces. Our approach is fully automatic, meaning it does not rely on externally provided invariants or templates.
We employ algebraic techniques based on linear recurrences and introduce program transformations to simplify probabilistic programs while preserving their statistical properties. 
We develop power reduction techniques to further simplify the polynomial arithmetic of probabilistic programs and define the theory of moment-computable probabilistic loops for which higher moments can precisely be computed. Our work has applications towards recovering probability distributions of random variables and computing tail probabilities. The empirical evaluation of our results demonstrates the applicability of our work on many challenging examples.
\end{abstract}

\begin{CCSXML}
<ccs2012>
    <concept>
        <concept_id>10002950.10003648.10003700.10003701</concept_id>
        <concept_desc>Mathematics of computing~Markov processes</concept_desc>
        <concept_significance>500</concept_significance>
    </concept>
    <concept>
        <concept_id>10010147.10010148.10010149</concept_id>
        <concept_desc>Computing methodologies~Symbolic and algebraic algorithms</concept_desc>
        <concept_significance>500</concept_significance>
    </concept>
    <concept>
        <concept_id>10003752.10010061.10010065</concept_id>
        <concept_desc>Theory of computation~Random walks and Markov chains</concept_desc>
        <concept_significance>300</concept_significance>
    </concept>
</ccs2012>
\end{CCSXML}

\ccsdesc[500]{Mathematics of computing~Markov processes}
\ccsdesc[500]{Computing methodologies~Symbolic and algebraic algorithms}
\ccsdesc[300]{Theory of computation~Random walks and Markov chains}

\keywords{Probabilistic Programs, Higher Moments, Linear Recurrences, Distribution Recovery}  

\maketitle

\section{Introduction}\label{sec:introduction}

Probabilistic programming languages enrich classical imperative or functional languages with native primitives to draw samples from random distributions, such as Bernoulli, Uniform, and Normal distributions. 
The resulting probabilistic programs (PPs)~\cite{Kozen1985,Barthe2020} embed uncertain quantities, represented by random variables, within standard program control flows.
As such, PPs offer a unifying framework to naturally encode probabilistic machine learning models~\cite{Ghahramani2015}, for example Bayesian networks~\cite{Kaminski2016}, into programs. Moreover, PPs enable programmers to handle uncertainty resulting from sensor measurements and environmental perturbations in cyber-physical systems~\cite{Selyunin2015,Chou2020}.
Other notable examples of PPs include the implementation of cryptographic~\cite{Barthe2012} and privacy~\cite{Barthe2012a} protocols, as well as randomized algorithms~\cite{Motwani1995} such as 
Herman's self-stabilization protocol~\cite{Herman1990} for recovering from faults in a process token ring --- see our example in Figure~\ref{fig:herman3}.

\paragraph{Analysis of PPs.}
The random nature of PPs makes their functional analysis very challenging as one needs to reason about probability distributions of random variables instead of computing with single variable values~\cite{Barthe2020}.
A standard approach towards handling probability distributions associated with random variables is to estimate such distributions by sampling PPs using Monte Carlo simulation techniques~\cite{Hastings1970}.
While such approaches work well for statistical model checking~\cite{Younes2006}, they are not suitable for the analysis of PPs with potentially infinite program loops as simulating infinite-state behavior is not always viable.
Moreover, even for PPs with finitely many states, simulation-based analysis is inherently approximative.

With the aim of precisely, and not just approximately, handling random variables, 
probabilistic model checking~\cite{Kwiatkowska2011,Dehnert2017} became a prominent approach in the analysis of PPs with finite state spaces.
For analyzing unbounded PPs, these techniques would however require non-trivial user guidance, in terms of assertion templates and/or invariants.

\emph{In this paper, we address the challenge of precisely analyzing, and even recovering, probability distributions induced by PPs with both countably and uncountably infinite state spaces.} 
We do so by extending both expressivity and automation of the state-of-the-art in PP analysis: We (i) focus on PPs with probabilistic infinite loops (see Figure~\ref{fig:running-example}) and (ii) fully automate the analysis of such loops by computing exact higher-order statistical moments of program variables $x$ parameterized by a loop counter $n$.

Functional representations $f(n)$ for a program variable $x$, with $f(n)$ characterizing the $k$th moment $\E(x^k_n)$ of $x$ at iteration $n$, can be interpreted as a quantitative invariant $\E(x^k_n) - f(n) = 0$, as the equation is true for all loop iterations $n \in \N$.
Inferring quantitative invariants is arguably not novel.
On the contrary, it is one of the most challenging aspects of PP analysis, dating back to the seminal works of~\cite{McIver2005,Katoen2010} introducing the weakest pre-expectations calculus.
Template-based approaches to discover invariants or (super-)martingales emerged~\cite{Barthe2016,Kura2019} by translating the invariant generation problem into a constraint solving one.
The derived quantitative invariants are generally provided in terms of expected values~\cite{Chakarov2014,Katoen2010,McIver2005}. 
Nevertheless, the expected value alone --- also referred to as the \emph{first moment} --- provides only partial information about the underlying probability distribution.
This motivates the critical importance of higher moments for PP analysis~\cite{Kura2019,Bartocci2020a,Wang2021,Stankovic2022}.

\paragraph{Higher Moments for PP Analysis.}
Using concentration-of-measure inequalities~\cite{Boucheron2013}, we can utilize higher moments $\E(X^k)$ to obtain upper and lower bounds on tail probabilities $\P(X > t)$, measuring the probability that a given random variable $X$, corresponding for example to our program variables $x$ from Figure~\ref{fig:running-example}, surpasses some value $t$.
{\it In this paper, we also show that when a program variable $x$ admits only $k < \infty$ many values, we can fully recover its probability mass function as a closed-form expression in the loop counter $n$ using the first $k{-}1$ raw moments} (see Section~\ref{sec:moments}).
Furthermore, raw moments can be used to compute central moments $\E((X - \E(X))^k)$ and thus provide insights on other important characteristics of the distribution such as the \emph{variance}, \emph{skewness} and \emph{kurtosis}~\cite{Durrett2019}.
However, {\it computing exact higher statistical moments} for PPs is computationally expensive~\cite{Kaminski2019}, a challenge which we also {\it address in this paper}, as illustrated in Figures~\ref{fig:running-example}--\ref{fig:herman3} and described next.

\begin{figure}
  \centering
  \begin{minipage}{0.46\textwidth}
    \begin{lstlisting}[basicstyle=\footnotesize]
    toggle,sum,x,y,z = 0,$s_0$,1,1,1
    while $\star$:
      toggle = 1-toggle
      if toggle == 0:
        x = x+1 {1/2} x+2
        y = y+z+x**2 {1/3} y-z-x
        z = z+y {1/4} z-y
      end
      l,g = Laplace(x+y, 1),Normal(0,1)
      if g < 1/2: sum = sum+x end
    end
    \end{lstlisting}
  \end{minipage}
  \begin{minipage}{0.53\textwidth}
    \small
    \begin{tcolorbox}[width=0.9\linewidth,boxrule=1pt,leftrule=2pt,arc=0pt,auto outer arc]  
      $\E(\text{toggle}_n) = \frac{1}{2} - \frac{(-1)^n}{2}$ \\
      \ \\
      $\E(x_n) = \frac{5}{8} + \frac{3n}{4} + \frac{3(-1)^n}{8}$ \\
      \ \\
      $\E(x_n^2) = \frac{15}{32} + \frac{17n}{16} + \frac{9n(-1)^n}{16} + \frac{17(-1)^n}{32} + \frac{9n^2}{16}$ \\
      \ \\
      $\E(l_n) = \frac{-17}{8} - \frac{15n}{4} + \frac{67 \cdot 2^{-n} \cdot 6^{\frac{n}{2}}}{10} + \frac{67 \cdot 2^{-n} 6^{\frac{1 + n}{2}}}{30} -$\\
      $\text{\ \ \ }\frac{37 \cdot 3^{-n} 6^{\frac{n}{2}}}{10} - \frac{37 \cdot 3^{-n} 6^{\frac{1 + n}{2}}}{20} + \frac{67 \cdot 6^{\frac{n}{2}} (-1)^n}{10 \cdot 2^n} -$\\
      $\text{\ \ \ }\frac{15 (-1)^n}{8} - \frac{67 \cdot 6^{\frac{1 + n}{2}} (-1)^n}{30 \cdot 2^n} + \frac{37 \cdot 6^{\frac{1+n}{2}} (-1)^n}{20 \cdot 3^n} -$
      $\text{\ \ \ }\frac{37 \cdot 6^{\frac{n}{2}} (-1)^n}{10 \cdot 3^n}$
    \end{tcolorbox}
  \end{minipage}
  \caption{An example of a multi-path PP loop, with Laplace and Normal distributions parametrized by program variables. Our work fully automates the analysis of such and similar PP loops by computing higher moments. Several moments for program variables in the loop counter $n$ are listed on the right. Each moment was automatically generated.}
  \label{fig:running-example}
\end{figure}

\paragraph{Computing Higher Moments.}
The theory we establish in this paper describes how to compute higher moments of program variables for a large class of probabilistic loops and how to utilize these moments to gain more insights into the analyzed programs. We call this theory the \emph{theory of moment-computable probabilistic loops} (Section~\ref{sec:computing-moments}). 
Our approach is fully automatic, meaning it does \emph{not} rely on externally provided invariants or templates.
Unlike constraint solving over templates~~\cite{Barthe2016,Kura2019}, we employ algebraic techniques based on systems of \emph{linear recurrences} with constant coefficients describing so-called \emph{C-finite sequences}~\cite{Kauers2011}.
Different equivalence preserving program transformations (Section~\ref{sec:model}) and power reduction of finite valued variables (Section~\ref{sec:types}) allow us to simplify PPs and represent their higher moments as linear recurrence systems in the loop counter.
Figure~\ref{fig:running-example} shows a PP with many unique features supported by our work towards PP analysis: it has an uncountable state-space, contains if-statements, symbolic constants, draws from continuous probability distributions with state-dependent parameters, and employs polynomial arithmetic as well as circular variable dependencies. {\it We are not aware of other works automating the reasoning about such and similar probabilistic loops}, in particular for computing precise higher moments of variables. Figure~\ref{fig:running-example} lists some of the variables' moments computed automatically by our work. 
Further, these moments can be used to compute tail probability bounds or central moments, such as the variance, to characterize the distribution of the program variables as the loop progresses.

Thanks to our power reduction techniques (Section~\ref{sec:types}), our approach supports arbitrary polynomial dependencies among finite valued variables.
Moreover, our work can fully recover the value distributions of finite valued program variables, from finitely many higher moments, as illustrated in Section~\ref{sec:moments} for Herman's self-stabilization algorithm from Figure~\ref{fig:herman3}.

\paragraph{Theory and Practice in Computing Higher Moments.} In theory, our approach can compute any higher moment for any variable and PP of our program model, under assumptions stated in Sections~\ref{sec:model} and \ref{sec:computing-moments}. 
We also establish the necessity of these assumtions in Section~\ref{ssec:necessity}. 
In a nutshell, the completeness theorem (Theorem~\ref{thm:moment-computability}) holds for probabilistic loops for which non-finite program variables are not polynomially self-dependent and all branching conditions are over finite valued variables.
We strengthen the theory of~\cite{Bartocci2019} to support if-statements, circular variable dependencies, state-dependent distribution parameters, simultaneous assignments, and multiple assignments, and establish the necessity of our assumptions.
Moreover, unlike~\cite{Wang2021}, our approach does not rely on templates and provides exact closed-form representations of higher moments parameterized by the loop counter.

In practice, our approach is implemented in the \Tool{} tool and compared against exact as well as approximate methods~\cite{PolarArtifact}.
Our experiments (Section~\ref{sec:evaluation}) show that \Tool{} outperforms the state-of-the-art of moment computation for probabilistic loops in terms of supported programs and efficiency.
Furthermore, \Tool{} is able to compute exact higher moments magnitudes faster than sampling can establish reasonable confidence intervals.

\paragraph{Contributions.}
Our main contributions are listed below: 
\begin{itemize}
  \item An automated approach for computing higher moments of program variables for a large class of probabilistic loops with potentially uncountable state spaces (Sections \ref{sec:model}-\ref{sec:computing-moments}).
  \item We develop power reduction techniques to reduce the degrees of finite valued program variables in polynomials (Section~\ref{sec:types}).
  \item We prove completeness of our work for computing higher moments (Section \ref{sec:computing-moments}).
  \item We fully recover the distributions of finite valued program variables and approximate distributions for unbounded/continuous program variables from finitely many moments (Section~\ref{sec:moments}).
  \item We provide an implementation and empirical evaluation of our work, outperforming the state-of-the-art in PP analysis in terms of automation and expressivity (Section~\ref{sec:evaluation}).
\end{itemize}

\begin{figure}
  \centering
  \begin{minipage}{0.5\textwidth}
    \begin{lstlisting}[basicstyle=\footnotesize]
    x1, x2, x3 = 1, 1, 1
    t1, t2, t3 = 1, 1, 1
    p = 1/2; tokens = t1 + t2 + t3
    while $\star$:
      x1o, x2o, x3o = x1, x2, x3
      if x1o == x3o: x1=Bernoulli(p) else: x1=x3o end
      if x2o == x1o: x2=Bernoulli(p) else: x2=x1o end
      if x3o == x2o: x3=Bernoulli(p) else: x3=x2o end
      
      if x1 == x3: t1 = 1 else: t1 = 0
      if x2 == x1: t2 = 1 else: t2 = 0
      if x3 == x2: t3 = 1 else: t3 = 0
      tokens = t1 + t2 + t3
    end
    \end{lstlisting}
  \end{minipage}
  \small
  \begin{minipage}{\linewidth}
    \begin{tcolorbox}[width=\textwidth,boxrule=1pt,leftrule=2pt,arc=0pt,auto outer arc]  
      $\E(tokens_n) = 1 + 2 \cdot 4^{-n}$
      \hfill
      |
      \hfill
      $\E(tokens^2_n) = 1 + 8 \cdot 4^{-n}$
      \hfill
      |
      \hfill
      $\E(tokens^3_n) = 1 + 26 \cdot 4^{-n}$
    \end{tcolorbox}
  \end{minipage}
  \caption{Herman's self stabilization algorithm with three nodes encoded as a probabilistic loop together with three moments of $tokens$.}
  \label{fig:herman3}
\end{figure}

\section{Preliminaries}\label{sec:preliminaries}

We use the symbol $\P$ for probability measures and $\E$ for the expectation operator.
The support of a random variable $X$ is denoted by $\supp(X)$.

\subsection{Probability Theory}\label{ssec:prob-theory}

Operationally, a probabilistic program is a Markov chain with potentially uncountably many states.
Let us recall some notions about Markov chains.
For more details on Markov chains and probability theory in general we refer the reader to \cite{Durrett2019}.

For a fixed set $S$, a $\sigma$-algebra is a non-empty set of subsets of $S$ closed under complementation and countable unions.

\begin{definition}[Sequence Space]
Let $(S, \mathcal{S})$ be a measurable space, that is, $S$ is a set with a $\sigma$-algebra $\mathcal{S}$.
Its \emph{sequence space} is the measurable space $(S^\omega, \mathcal{S}^\omega)$ where
$S^\omega := \{ (s_1, s_2, \dots) : s_i \in S \}$ and
$\mathcal{S}^\omega$ is the $\sigma$-algebra generated by the \emph{cylinder sets} $Cyl[B_1, \dots,B_n] := \{ \theta : \theta_i \in B_i, 1 \leq i \leq n \}$ for all prefixes $B_1, \dots, B_n \in \mathcal{S}$ and all $n \in \N$.
\end{definition}

A \emph{Markov kernel} is, on a high level, a generalization of transition probabilities between states to uncountable state spaces and is required for the definition of a \emph{Markov chain}.

\begin{definition}[Markov Chain]
Let $(S, \mathcal{S}, \P)$ be a probability space and $p : S \times \mathcal{S} \to [0, 1]$ a Markov kernel.
A stochastic process $X_n$ is a \emph{Markov chain} with Markov kernel $p$ if
\begin{equation}
    \P(X_{n+1} \in B \mid X_0=x_0, X_1=x_1, \dots, X_n=x_n) = p(X_n, B).
\end{equation}
\end{definition}

Given a measurable space $(S, \mathcal{S})$ , an \emph{initial distribution} $\mu$, a stochastic process $X_n$ and a Markov kernel $p$, \emph{Kolmogorov's Extension Theorem} says that there is a unique measure $\P$ such that $X_n$ is a Markov chain in $(S^\omega, \mathcal{S^\omega}, \P)$.

For a random variable $X$, central moments $\E((X-\E(X))^k)$ can be computed from raw moments $\E(X^k)$ and vice versa through the transformation of center:
\begin{equation}
    \E\left( (X-b)^k \right) = \E\left( ((X-a) + (a-b))^k \right) = \sum_{i=0}^k \binom{k}{i} \E\left( (X-a)^i \right) (a-b)^{k-i}.
\end{equation}

\subsection{Linear Recurrences}\label{ssec:recurrences}

We briefly recall standard terminology on algebraic sequences and recurrences. For further details, we refer the reader to \cite{Kauers2011}.
A sequence $(a_n)_{n=0}^\infty$ is called \emph{C-finite} if it obeys a linear recurrence with constant coefficients, that is, $(a_n)_{n=0}^\infty$ satisfies an equation of the form
\begin{equation*}
    a_{n+l} = c_{l-1} \cdot a_{n-l-1} + c_{l-2} \cdot a_{n-l-2} + \dots + c_0 \cdot a_n,
\end{equation*}
for some \emph{order} $l \in \N$, some constants $c_i \in \R$ and all $n \in \N$.

\begin{theorem}[Closed-form \cite{Kauers2011}]
\label{thm:closed-form}
Every C-finite sequence $(a_n)_{n=0}^\infty$ can be written as an \emph{exponential polynomial}, that is $a_n = \sum_{i=1}^m n^{d_i} u_i^n$ for some natural numbers $d_i \in \N$ and complex numbers $u_i \in \Complex$.
We refer to $\sum_{i=1}^m n^{d_i} u_i^n$ as the \emph{closed-form} or the \emph{solution} of the sequence $(a_n)_{n=0}^\infty$ or its recurrence.
\end{theorem}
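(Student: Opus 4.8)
The plan is to replace the recurrence by the action of the \emph{shift operator} $E$ on sequences, defined by $(Ea)_n := a_{n+1}$. Then a sequence satisfying $a_{n+l} = c_{l-1}a_{n+l-1} + \dots + c_0 a_n$ is exactly a sequence annihilated by the operator $\chi(E)$, where $\chi(x) := x^l - c_{l-1}x^{l-1} - \dots - c_0 \in \Complex[x]$ is the \emph{characteristic polynomial}. Let $V := \{\, a : \chi(E)a = 0 \,\}$; this is a $\Complex$-vector space, and the map $V \to \Complex^l$ reading off the first $l$ entries $(a_0,\dots,a_{l-1})$ is a linear bijection, since the recurrence uniquely propagates any prescribed initial block forward and conversely determines it from that block. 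Hence $\dim_\Complex V = l$, and it suffices to exhibit $l$ linearly independent \emph{exponential-polynomial} members of $V$: they then form a basis, so every C-finite sequence lies in such a $V$ and is a $\Complex$-linear combination of sequences of the form $n^d u^n$, i.e. an exponential polynomial.

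I would then factor $\chi$ over the algebraically closed field $\Complex$ as $\chi(x) = \prod_{i=1}^r (x - u_i)^{e_i}$ with distinct $u_i$ and $\sum_{i=1}^r e_i = l$, assuming for now that the recurrence is non-degenerate ($c_0 \neq 0$), so that every $u_i \neq 0$. Two elementary computations do the work: (a) for $u \neq 0$, $(E - u)\bigl(p(n)\,u^n\bigr) = u^{n+1}\bigl(p(n+1) - p(n)\bigr)$, whose polynomial part is the finite difference of $p$ and so has degree exactly $\deg p - 1$; and (b) for $u \neq v$, $(E - v)\bigl(p(n)\,u^n\bigr) = u^n\bigl(u\,p(n+1) - v\,p(n)\bigr)$, whose polynomial part has the same degree as $p$ (its leading coefficient is scaled by the nonzero factor $u - v$). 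By (a), $(E - u_i)^{e_i}$ annihilates $n^d u_i^n$ for every $d < e_i$; since polynomials in $E$ commute, $\chi(E)$ therefore also annihilates each of the $l$ sequences $n^d u_i^n$ with $1 \le i \le r$ and $0 \le d < e_i$, so all of them lie in $V$.

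The remaining — and I expect the only genuinely fiddly — step is linear independence of those $l$ sequences. I would argue by induction on $r$: suppose $\sum_{i=1}^r q_i(n)\,u_i^n = 0$ for all $n$ with $\deg q_i < e_i$. For $r = 1$, $q_1(n)u_1^n \equiv 0$ with $u_1 \neq 0$ forces the polynomial $q_1$ to vanish identically. For $r > 1$, apply $(E - u_r)^{e_r}$: by (a) it annihilates the $i = r$ term, and by (b) it sends each other $q_i(n)u_i^n$ to $\tilde q_i(n)u_i^n$ with $\deg\tilde q_i = \deg q_i$, hence $\tilde q_i \equiv 0$ iff $q_i \equiv 0$. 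The induction hypothesis applied to $\sum_{i=1}^{r-1}\tilde q_i(n)u_i^n = 0$ yields $q_i \equiv 0$ for $i < r$, and the original identity then collapses to $q_r(n)u_r^n \equiv 0$, so $q_r \equiv 0$ as well. Thus the $l$ sequences form a basis of $V$; writing any element of $V$ in this basis and expanding gives $a_n = \sum q_i(n)u_i^n$, which after collecting monomials is precisely the asserted form $\sum_{i=1}^m n^{d_i} u_i^n$.

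Finally, for the degenerate case $c_0 = 0$, write $\chi(x) = x^e\psi(x)$ with $\psi(0) \neq 0$. Since $E^e$ and $\psi(E)$ commute and $\chi(E)a = 0$, the shifted sequence $a' := E^e a$, i.e. $a'_n = a_{n+e}$, satisfies $\psi(E)a' = 0$; the non-degenerate argument above gives $a'$ a closed form, and re-indexing produces an exponential-polynomial closed form for $a_n$ valid at all $n \ge e$ (for the finitely many smaller indices one may, if needed, absorb the discrepancy into a term in $0^n$ under the convention $0^0 = 1$). As the C-finite sequences arising in loop analysis are in practice non-degenerate, this case is a minor technicality, and the main content is the eigenvalue/multiplicity bookkeeping in the independence argument above.
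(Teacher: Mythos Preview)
Your proof is correct and follows the standard route: pass to the shift operator, factor the characteristic polynomial over $\Complex$, exhibit the $l$ candidate sequences $n^d u_i^n$, and establish their linear independence by repeatedly applying $(E-u_r)^{e_r}$ to peel off one root at a time. The degree bookkeeping in your computations (a) and (b) is accurate, and the induction on $r$ for independence is clean.

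However, there is nothing to compare against: the paper does \emph{not} prove this theorem. It is stated in the preliminaries with a citation to \cite{Kauers2011} and used as a black box throughout (most prominently in the proof of Lemma~\ref{lemma:moment-computability}, where the existence of computable exponential-polynomial closed forms for C-finite systems is simply invoked). Your sketch is essentially the textbook argument one finds in that reference, so in spirit you have reconstructed exactly what the paper defers to the literature.

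One minor remark: the paper's statement omits scalar coefficients in front of the $n^{d_i}u_i^n$ terms, which is a slight informality (as the worked examples later in the paper make clear, coefficients are of course present). Your proof correctly produces $a_n = \sum_i q_i(n)\,u_i^n$ with polynomial $q_i$, which is the honest form. Also, your treatment of the degenerate case $c_0 = 0$ via the $0^n$ term is a little casual---strictly, absorbing finitely many initial values into a single $0^n$ term only fixes $a_0$, not $a_0,\dots,a_{e-1}$---but this is easily repaired (e.g.\ by observing that any eventually-exponential-polynomial sequence with a finite initial segment is again C-finite of slightly higher order, or by keeping the root $0$ with multiplicity $e$ in the basis), and for the paper's purposes the non-degenerate case is all that matters.
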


An important fact is that closed-forms of linear recurrences with constant coefficients of \emph{any order} always exist and are computable.
This also holds for all variables in \emph{systems} of linear recurrences with constant coefficients.

\section{Probabilistic Program Model}\label{sec:model}

In this section we introduce our programming model (Section~\ref{sec:model:syntax}) and describe its semantics in terms of Markov chains (Section~\ref{ssec:PP-semantics}).
Moreover, we introduce transformations (\ref{ssec:PP-transforms}) normalizing a probabilistic program to simplify its analysis.

\subsection{Probabilistic Program Syntax}\label{sec:model:syntax}

\begin{figure}
    {
    \begin{minipage}{0.73\linewidth}
    \footnotesize
    $\mathit{lop} \in \{ \mathit{and}, \mathit{or} \}$,
    $\mathit{cop} \in \{ =, \neq, <, >, \geq, \leq \}$,
    $\mathit{Dist} \in \{ \mathit{Bernoulli}, \mathit{Normal}, \mathit{Uniform}, \dots \}$
    \begin{grammar}
    	<sym> ::= "a" | "b" | $\dots$ <var> ::= "x" | "y" | $\dots$
    	
    	<const> ::= $r \in \R$ | <sym> | <const> ( "+" | "*" | "/" ) <const>
    	
    	<poly> ::= <const> | <var> | <poly> ("+" | "-" | "*") <poly> | <poly>"**n"
    	
    	<assign> ::= <var> "=" <assign\_right> | <var> "," <assign> "," <assign\_right>
    	
    	<categorical> ::= <poly> ("\{"<const>"\}" <poly>)* ["\{"<const>"\}"]
    	
    	<assign\_right> ::= <categorical> | Dist"("<poly>$^*$")" | "Exponential("<const>"/"<poly>")"
    	
    	<bexpr> ::= "true" ($\star$) | "false" | <poly> <cop> <poly> | "not" <bexpr> | <bexpr> <lop> <bexpr>
    	
    	<ifstmt> ::= "if" <bexpr>":" <statems> ("else if" <bexpr>":" <statems>)$^*$ ["else:" <statems>] "end"
    	
    	<statem> ::= <assign> | <ifstmt> \quad \quad \quad <statems> ::= <statem>$^+$
    	
    	<loop> ::= <statem>* "while" <bexpr> ":" <statems> "end"
    \end{grammar}
    \end{minipage}
    }
    \caption{Grammar describing the syntax of probabilistic loops $\langle \textit{loop} \rangle$.}
    \label{fig:syntax}
\end{figure}

The syntax defining our program model is given by the grammar in Figure~\ref{fig:syntax}.
Throughout the paper, we will use the phrases \emph{(probabilistic) loops} and \emph{(probabilistic) programs} interchangeably for loops adhering to the syntax in Figure~\ref{fig:syntax}.
In our work, we infer higher moments $\E(x_n^k)$ of program variables $x$ parameterized by the loop counter $n$. 
We abstract from concrete loop guards by defining the guards of programs in our program model to be \emph{true} (written as~$\star$).
Guarded loops \lstinline{while $\phi$: $\dots$} can be modeled as an infinite loops \lstinline{while $\star$: if $\phi$: $\dots$}, with the limit behaviour giving the moments after termination (cf. Section~\ref{subsec:guarded-loops}).

Our program model defined in Figure~\ref{fig:syntax} contains non-nested while-loops which are preceded by a loop-free initialization part.
The loop-body and initialization part allow for (nested) if-statements, polynomial arithmetic, drawing from common probability distributions, and symbolic constants.
Symbolic constants can be used to represent arbitrary real numbers and are also used for uninitialized program variables. 
Categorical expressions (defined by the non-terminal $\langle \textit{categorical} \rangle$ in Figure~\ref{fig:syntax}) are expressions of the form $v_1 \{p_1\} \dots v_l \{p_l\}$ such that $\sum p_i = 1$.
Their intended meaning is that they evaluate to $v_i$ with probability $p_i$.
The last parameter $p_l$ can be omitted and in that case is set to $p_l := 1 - \sum_{i=1}^{l-1} p_i$.
For a program $\Program$ we denote with $\Vars(\Program)$ the set of $\Program$'s variables appearing on the left-hand side of an assignment in $\Program$'s loop-body.
The programs of Figures~\ref{fig:running-example}-\ref{fig:herman3} are examples of our program model defined in Figure~\ref{fig:syntax}.
In comparison to the \emph{probabilistic Guarded Command Language (pGCL)}~\cite{Barthe2020}, programs of our model contain exactly one while-loop, no non-determinism
\footnote{Non-determinism is different from probabilistic choice. Demonic (angelic) non-determinism is concerned with the worst-case (best-case) behavior. For instance, a variable can be assigned to $0$ or $1$ both with probability $\nicefrac{1}{2}$. This is different from assigning $0$ or $1$ non-deterministically, where the probability is not specified.}
but support continuous distributions and simultaneous assignments.

\subsection{Program Semantics}\label{ssec:PP-semantics}
In what follows we define the semantics of probabilistic programs in terms of Markov chains on a measurable space.
We then introduce the notion of \emph{normalized} probabilistic loops by means of so-called \emph{$\Program$-preserving program transformations} (Section~\ref{ssec:PP-transforms}).

\begin{definition}[State \& Run Space]
Let $\Program$ be a probabilistic program with $m$ variables.
We denote by \emph{$\text{ND}(\Program)$  the non-probabilistic program obtained from $\Program$} by replacing every probabilistic choice $C$ in $\Program$ by a non-deterministic choice over $\supp(C)$.
Let $\States_\Program \subseteq \R^m$ be the set of program states of $\text{ND}(\Program)$ reachable from any initial state.
The  \emph{state space} of $\Program$ is the measurable space $(\States_\Program, \mathcal{S}_\Program)$, where $\mathcal{S}_\Program$ is the Borel $\sigma$-algebra on $\R^m$ restricted to $\States_\Program$.
The  \emph{run space} of $\Program$ is the sequence space $(\States_\Program^\omega, \mathcal{S}_\Program^\omega) =: (\Runs_\Program, \mathcal{R}_\Program)$.
\end{definition}

In what follows, we omit the subscript $\Program$ whenever the program $\Program$ is irrelevant or clear from the context.
Executions/runs of a probabilistic program $\Program$ define a stochastic process, as follows. 

\begin{definition}[Run Process]\label{def:run-process}
Let $\Program$ be a probabilistic program with $m$ variables.
The \emph{run process} $\Phi_n : \Runs \to \States$ is a stochastic process in the run space mapping a program run to its $n$th state, that means, $\Phi_n(run) := run_n$.

For program variable $x$ with index $i \geq 1$, we denote by $x_n$ \emph{the projection of $\Phi_n$ to its $i$th component} $\Phi_n(\cdot)(i)$.
Given an arithmetic expression $A$ over $\Program$'s variables, we write $A_n$ for the stochastic process where every program variable $x$ in $A$ is replaced by $x_n$.
\end{definition}

\begin{remark}
Given an initial distribution of program states $\mu$ and a Markov kernel $p$ defined according to the \emph{standard meaning} of the program statements, by \emph{Kolmogorov's Extension Theorem}
we conclude that there  is a unique probability measure $\P_\Program$ on $(\Runs_\Program, \mathcal{R}_\Program)$ such that the run process is a Markov chain.
$(\Runs_\Program, \mathcal{R}_\Program, \P_\Program)$ is the probability space associated to program $\Program$.
Distributions and (higher) moments of $\Program$'s variables are to be understood with respect to this probability space.
\end{remark}

For probabilistic loops according to the syntax in Figure~\ref{fig:syntax}, the initial distribution $\mu$ of values of loop variables is the distribution of states after the statements $\langle \textit{statem} \rangle^*$ just before the while-loop.
Moreover, the loop body in Figure~\ref{fig:syntax} is considered to be atomic, meaning the Markov kernel $p$ describes the transition between full iterations in contrast to single statements.

\subsection{\texorpdfstring{$\Program$}{}-Preserving Transformations}\label{ssec:PP-transforms}

Our probabilistic programs defined by the grammar in Figure~\ref{fig:syntax} support rich arithmetic and complex probabilistic behavior/distributions. Such an expressivity of Figure~\ref{fig:syntax} comes at the cost of turning the analysis of programs defined by Figure~\ref{fig:syntax} cumbersome.  
In this section, we address this difficulty and introduce a number of program transformations that allow us to simplify our probabilistic programs to a so-called \emph{normal form} while preserving the joint distribution of program variables. Normal forms allow us to extract recursive properties from the program, which we will later use to compute moments for program variables (Section~\ref{sec:computing-moments}).

\paragraph{Schemas and Unification.}
The program transformations we introduce in this section build on the notion of \emph{schemas} and program parts.
A \emph{program part} is an empty word or any word resulting from any non-terminal of the grammar in Figure~\ref{fig:syntax}.
For our purposes, a schema $S$ is a program part with some subtrees in the program part's syntax tree being replaced by placeholder symbols $\dot{s_1},\dots,\dot{s_l}$.
A~\emph{substitution} is a finite mapping $\sigma = \{\dot{s_1} \mapsto p_1, \dots, \dot{s_l} \mapsto p_l \}$ where $p_1,\dots,p_l$ are program parts.
We denote by $S[\sigma]$ the program part resulting from $S$ by replacing every $\dot{s_i}$ by $p_i$, assuming $S[\sigma]$ is well-formed.
For two schemas $S_1$ and $S_2$ a substitution $u$ such that $S_1[u] = S_2[u]$ is called a \emph{unifier} (with respect to $S_1$ and $S_2$).
In this case $S_1$ and $S_2$ are called \emph{unifiable} (by $u$).

\paragraph{Transformations.}
In what follows, we consider $\Program$ to be a fixed probabilistic program defined by Figure~\ref{fig:syntax} and give all definitions relative to $\Program$.
A \emph{transformation} $T$ is a mapping from program parts to program parts with respect to a schema $\textit{Old}$.
$T$ is \emph{applicable} to a subprogram $S$ of $\Program$ if $S$ and $\textit{Old}$ are unifiable by the unifier $u$.
Then, the transformed subprogram is defined as $T(S) := \textit{New}[u]$ where $\textit{New}$ is a schema depending on $\textit{Old}$ and $u$.
A transformation is fully specified by defining how $\textit{New}$ results from $\textit{Old}$ and $u$.
We write $T(\Program, S)$ for the program resulting from $\Program$ by replacing the subprogram $S$ of $\Program$ by $T(S)$.

The first transformation we consider removes simultaneous assignments from $\Program$. For this, we store a copy of each assignment in an auxiliary variable to preserve the values used for simultaneous assignments, in case an assigned variable appears in an assignment expression. Variables are then assigned their intended value.

\pagebreak 
\begin{definition}[Simultaneous Assignment Transformation]
\label{def:simult-transf}
A \emph{simultaneous assignment transformation} is the transformation defined by

\lstinline{$\dot{x_1}$, $\dots$, $\dot{x_l}$ = $\dot{v_1}$, $\dots$, $\dot{v_l}$}
\ $\mapsto$\ \
\lstinline{$t_1$=$\dot{v_1}$; $\dots$; $t_l$=$\dot{v_l}$; $\dot{x_1}$=$t_1$; $\dots$; $\dot{x_l}$=$t_l$,}
\ \ 
\\where $t_1,\dots,t_l$ are fresh variables.
\end{definition}

In what follows, we assume that parameters of common distributions used in programs are constant.
Nevertheless, the following transformation enables the use of some non-constant distribution parameters.

\begin{definition}[Distribution Transformation]
\label{def:dist-transf}

A \emph{distribution transformation} is a transformation defined by either of the mappings

\begin{itemize}
    \item \lstinline{$\dot{x}$ = Normal($\dot{p}$,$\dot{v}$)} \tabto{2.75cm}$\transmap$ \lstinline{t = Normal($0$,$\dot{v}$); $\dot{x}$=$\dot{p}$+t}
    \item \lstinline{$\dot{x}$ = Uniform($\dot{p_1}$,$\dot{p_2}$)} \tabto{2.75cm}$\transmap$ \lstinline{t = Uniform($0$,$1$); $\dot{x}$=$\dot{p_1}$+($\dot{p_2}$-$\dot{p_1}$)*t}
    \item \lstinline{$\dot{x}$ = Laplace($\dot{p}$,$\dot{b}$)} \tabto{2.75cm}$\transmap$ \lstinline{t = Laplace($0$,$\dot{b}$); $\dot{x}$=$\dot{p}$+t}
    \item \lstinline{$\dot{x}$ = Exponential($\dot{c}$/$\dot{p}$)} \tabto{2.75cm}$\transmap$ \lstinline{t = Exponential($\dot{c}$); $\dot{x}$=$\dot{p}$*t}
\end{itemize}
where, for every mapping, $t$ is a fresh variable.
\end{definition}

\begin{example}
Consider Figure~\ref{fig:running-example}. The simultaneous assignment \lstinline{l,g = Laplace(x+y,$1$),Normal($0$,$1$)} can be transformed using the transformation rules from Definitions~\ref{def:simult-transf}-\ref{def:dist-transf} as follows:

\begin{minipage}{\linewidth}
    \centering
    \begin{minipage}{0.08\linewidth}
    $\overset{\scriptscriptstyle\mathrm{(sim)}}{\transmap}$
    \end{minipage}
    \begin{minipage}{0.25\linewidth}
    \begin{lstlisting}
    t1=Laplace(x+y,1)
    t2=Normal(0,1)
    x=t1
    y=t2
    \end{lstlisting}
    \end{minipage}
    \begin{minipage}{0.08\linewidth}
    $\overset{\scriptscriptstyle\mathrm{(dist)}}{\transmap}$
    \end{minipage}
    \begin{minipage}{0.35\linewidth}
    \begin{lstlisting}
    t3=Laplace(0,1)
    t1=x+y+t3
    t2=Normal(0,1)
    x=t1
    y=t2
    \end{lstlisting}
    \end{minipage}
\end{minipage}
\end{example}

To simplify the structure of probabilistic loops, we assume \lstinline{else if} branches to be syntactic sugar for nested \lstinline{if else} statements. We remove \lstinline{else} by splitting it into if-statements (\lstinline{if $C$} and \lstinline{if not $C$}). Since variables in $C$ could be changed within the first branch, we store their original values in auxiliary variables and use those for the condition $C'$ of the second \lstinline{if} statement. We capture this transformation in the following definition.

\begin{definition}[Else Transformation]
\label{def:else-transf}
An \emph{else transformation} is the transformation

\noindent
\begin{minipage}{\linewidth}
    \centering
    \begin{minipage}{0.25\linewidth}
    \begin{lstlisting}
    if $\dot{C}$: $\dddot{Branch_1}$
    else: $\dddot{Branch_2}$ end
    \end{lstlisting}
    \end{minipage}
    \begin{minipage}{0.1\linewidth}
    , u $\transmap$
    \end{minipage}
    \begin{minipage}{0.3\linewidth}
    \begin{lstlisting}
    $t_1$=$x_1$;$\dots$;$t_l$=$x_l$
    if $\dot{C}$: $\dddot{Branch_1}$ end
    if not $C'$: $\dddot{Branch_2}$ end
    \end{lstlisting}
    \end{minipage}
\end{minipage}

where $x_1,\dots,x_l$ are all variables appearing in $\dot{C}[u]$ which are also being assigned in $\dddot{Branch_1}[u]$.
Every $t_i$ is a fresh variable and
$C'$ results from $\dot{C}[u]$ by substituting every $x_i$ with $t_i$.
\end{definition}

To further simplify the loop body into a flattened list of assignments, we equip every assignment $a$ of form \lq\lq\lstinline{x = value}\rq\rq with a condition $C_a$ (initialized to true $\top$) and a default variable $d_a$ (initialized to $x$), written as \lq\lq\lstinline{x = value [$C_a$] $d_a$}\rq\rq.
The semantics of the conditioned assignment is that $x$ is assigned $value$ if $C_a$ holds just before the assignment and $d_a$ otherwise.
With conditioned assignments, the loop body's structure can be flattened using the following transformation.

\begin{definition}[If Transformation]
\label{def:if-transf}
An \emph{if transformation} is the transformation defined by

\noindent
\begin{minipage}{\linewidth}
    \centering
    \begin{minipage}{0.18\linewidth}
        \begin{lstlisting}
        if $\dot{C_1}$:
        $\dot{x}$ = $\dot{v}$ [$\dot{C_2}$] $\dot{x}$
        $\dddot{Rest}$ end
        \end{lstlisting}
    \end{minipage}
    \begin{minipage}{0.1\linewidth}
    , u $\transmap$
    \end{minipage}
    \begin{minipage}{0.3\linewidth}
        \begin{lstlisting}
        t = $\dot{x}$
        $\dot{x}$ = $\dot{v}$ [$\dot{C_1}$ and $\dot{C_2}$] $\dot{x}$
        if $C$: $\dddot{Rest}$ end
        \end{lstlisting}
    \end{minipage}
\end{minipage}
where $t$ is a fresh variable and $C$ results from $\dot{C_1}[u]$ by substituting $\dot{x}[u]$ by $t$.
If $\dddot{Rest}[u]$ is empty, the line \lstinline{if $C$: $\dddot{Rest}$ end} is omitted from the result.
If $\dot{x}[u]$ does not appear in $\dot{C_1}[u]$ the line \lstinline{t = $\dot{x}$} is dropped.
\end{definition}

\begin{example}
The following program containing nested if-statements can be flattened as follows:

\noindent
\begin{minipage}{\linewidth}
    \centering
    \begin{minipage}{0.4\linewidth}
        \begin{lstlisting}
        if x == 1:
        x = Bernoulli(1/2)
        if x == 0: y = 1 end
        end
        \end{lstlisting}
    \end{minipage}
    \begin{minipage}{0.07\linewidth}
    $\overset{\scriptscriptstyle\mathrm{(if)}}{\transmap}$
    \end{minipage}
    \begin{minipage}{0.33\linewidth}
        \begin{lstlisting}
        if x == 1:
        x = Bernoulli(1/2)
        y = 1 [x == 0] y
        end
        \end{lstlisting}
    \end{minipage}
    \begin{minipage}{0.07\linewidth}
    $\overset{\scriptscriptstyle\mathrm{(if)}}{\transmap}$
    \end{minipage}
    
    \begin{minipage}{0.4\linewidth}
        \begin{lstlisting}
        t = x
        x = Bernoulli(1/2) [t == 1] x
        if t == 1:
        y = 1 [x == 0] y
        end
        \end{lstlisting}
    \end{minipage}
    \begin{minipage}{0.07\linewidth}
    $\overset{\scriptscriptstyle\mathrm{(if)}}{\transmap}$
    \end{minipage}
    \begin{minipage}{0.4\linewidth}
        \begin{lstlisting}
        t = x
        x = Bernoulli(1/2) [t == 1] x
        y = 1 [t == 1 $\land$ x == 0] y
        \end{lstlisting}
    \end{minipage}
\end{minipage}
\end{example}

To bring further simplicity to our program $\Program$, we ensure for each variable to be modified only once within the loop body. To remove duplicate assignments we introduce new variables $x_1, \dots, x_{l-1}$ to store intermediate states.
Assignments to other variables, in between the updates of $x$, will be adjusted to refer to the latest $x_i$ instead of $x$.

\begin{definition}[Multi-Assignment Transformation]
\label{def:multi-transf}
A \emph{multi-assignment transformation} is the transformation defined by

\noindent
\begin{minipage}{\linewidth}
    \centering
    \begin{minipage}{0.28\linewidth}
    \begin{lstlisting}
    $\dot{x}$ = $\dot{v_1}$ [$\dot{C_1}$] $\dot{x}$;$\dddot{Rest_1}$;
    $\dot{x}$ = $\dot{v_2}$ [$\dot{C_2}$] $\dot{x}$;$\dddot{Rest_2}$;
    $\dots$; $\dot{x}$ = $\dot{v_l}$ [$\dot{C_l}$] $\dot{x}$;
    \end{lstlisting}
    \end{minipage}
    \begin{minipage}{0.1\linewidth}
    , u $\transmap$
    \end{minipage}
    \begin{minipage}{0.3\linewidth}
    \begin{lstlisting}
    $x_1$ = $\dot{v_1}$ [$\dot{C_1}$] $\dot{x}$;$Rest_1$;
    $x_2$ = $v_2$ [$C_2$] $x_1$;$Rest_2$;
    $\dots$; $\dot{x}$ = $v_l$ [$C_l$] $x_{l-1}$;
    \end{lstlisting}
    \end{minipage}
\end{minipage}
where $x_1,\dots,x_{l-1}$ are fresh variables.
For $i \geq 2$, $v_i$, $C_i$ and $Rest_i$ result from $\dot{v_i}[u]$, $\dot{C_i}[u]$ and $\dddot{Rest}_i[u]$, respectively, by replacing $\dot{x}[u]$ by $x_{i-1}$.
\end{definition}

\begin{example}
In the program of Figure~\ref{fig:herman3},
program line \lstinline{if x1 == x3: t1 = 1 else: t1 = 0} 
can be transformed using transformation rules from Definitions~\ref{def:else-transf}-\ref{def:multi-transf} as follows:

\noindent
\begin{minipage}{\linewidth}
    \centering
    \begin{minipage}{0.1\linewidth}
    $\overset{\scriptscriptstyle\mathrm{(else)}}{\transmap}$
    \end{minipage}
    \begin{minipage}{0.35\linewidth}
    \begin{lstlisting}
    if x1 == x3: t1 = 1
    if x1 != x3: t1 = 0
    \end{lstlisting}
    \end{minipage}
    \ \\
    \begin{minipage}{0.1\linewidth}
    $\overset{\scriptscriptstyle\mathrm{(if)}}{\transmap}$
    \end{minipage}
    \begin{minipage}{0.35\linewidth}
    \begin{lstlisting}
    t1 = 1 [x1 == x3] t1
    t1 = 0 [x1 != x3] t1
    \end{lstlisting}
    \end{minipage}
    \ \\
    \begin{minipage}{0.1\linewidth}
    $\overset{\scriptscriptstyle\mathrm{(multi)}}{\transmap}$
    \end{minipage}
    \begin{minipage}{0.35\linewidth}
    \begin{lstlisting}
    t11 = 1 [x1 == x3] t1
    t1 = 0 [x1 != x3] t11
    \end{lstlisting}
    \end{minipage}
\end{minipage}
\end{example}

With program transformations defined, we can turn our attention to program properties. In particular, we show that our transformations of $\Program$ do not change the joint distribution of $\Program$'s variables. Since our transformations may introduce new variables, we consider  program equivalence with respect to program variables in order to ensure that the distribution of $\Program$ is maintained/preserved by our transformations.

\begin{definition}[Program Equivalence]
Let $\Program_1$ and $\Program_2$ be two probabilistic programs. We define 
$\Program_1$ and $\Program_2$ to be \emph{equivalent with respect to a set of program variables $X$}, in symbols $\Program_1 \equiv^X \Program_2$, if: 
\begin{enumerate}
    \item $X \subseteq \Vars(\Program_1) \cap \Vars(\Program_2)$,  and
    \item the joint distributions of $X$ arising from $\Program_1$ and $\Program_2$ are equal.
\end{enumerate}
\end{definition}

To relate a program $\Program$ to its transformed version $T(\Program, S)$ we consider the distribution of variables of the original program $\Program$. 
If $\Program$ retains the joint distribution of its variables after applying transformation $T$, we say that $T$ is $\Program$-preserving.

\begin{definition}[$\Program$-Preserving Transformation]
We say that a transformation $T$ is \emph{$\Program$-preserving} if
$\Program \equiv^{\Vars(\Program)} T(\Program, S)$ for all subprograms $S$ of $\Program$ which are unifiable with $Old$.
\end{definition}

It is not hard to argue that the transformations defined above are $\Program$-preserving, yielding the following result.

\begin{lemma}\label{thm:ProgramPreserving}
The transformations from Definitions~\ref{def:simult-transf}-\ref{def:multi-transf} are $\Program$-preserving.
\end{lemma}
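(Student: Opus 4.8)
The plan is to prove the lemma by handling each of the five transformations from Definitions~\ref{def:simult-transf}--\ref{def:multi-transf} separately, showing in each case that $\Program \equiv^{\Vars(\Program)} T(\Program, S)$ for every subprogram $S$ unifiable with $\textit{Old}$. The common strategy is: fix an arbitrary subprogram $S$ and unifier $u$; argue that replacing $S$ by $T(S)$ inside $\Program$ leaves the joint distribution of the original variables $\Vars(\Program)$ unchanged at every loop iteration $n$. Because the loop body is atomic (the Markov kernel describes full iterations), and because none of the transformations alter the initialization part's effect on $\Vars(\Program)$ nor the loop guard ($\star$), it suffices to check that the transformed code fragment computes, as a function of the incoming state restricted to $\Vars(\Program)$, the same output distribution on $\Vars(\Program)$ as the original fragment — the only difference being extra fresh auxiliary variables $t_i, x_i$ which are not in $\Vars(\Program)$ and hence do not affect $\equiv^{\Vars(\Program)}$.

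I would then go transformation by transformation. For the \emph{simultaneous assignment transformation} (Def.~\ref{def:simult-transf}): evaluating $\dot v_1,\dots,\dot v_l$ in the pre-state and writing them into fresh $t_1,\dots,t_l$ before assigning $\dot x_i = t_i$ reproduces exactly the semantics of a simultaneous assignment (all right-hand sides read the old state); a short induction on $l$ or a direct appeal to the definition of simultaneous assignment closes it. For the \emph{distribution transformation} (Def.~\ref{def:dist-transf}): each case is an instance of a standard location-scale (or scaling) identity for the named distribution — e.g. if $t \sim \Normal(0,\dot v)$ then $\dot p + t \sim \Normal(\dot p, \dot v)$, if $t\sim\Uniform(0,1)$ then $\dot p_1 + (\dot p_2-\dot p_1)t \sim \Uniform(\dot p_1,\dot p_2)$, and similarly for Laplace (shift) and Exponential (inverse-rate scaling $\dot p \cdot t$ when $t\sim\Exponential(\dot c)$). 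I would cite these as elementary facts about the distributions and note that the fresh $t$ is projected away. For the \emph{else transformation} (Def.~\ref{def:else-transf}): the key point is that storing $x_1,\dots,x_l$ (the variables of $\dot C$ that $\dddot{Branch_1}$ may overwrite) into fresh $t_i$ before executing the first branch makes $C'$ evaluate on the \emph{pre-branch} state, so \lstinline{if not $C'$: $\dddot{Branch_2}$} fires exactly when the original \lstinline{else} branch would have; since $C$ and $\lnot C$ are exhaustive and exclusive, exactly one branch executes in both programs, giving the same state distribution. For the \emph{if transformation} (Def.~\ref{def:if-transf}): the conditioned assignment \lstinline{$\dot x$=$\dot v$ [$\dot C_1$ and $\dot C_2$] $\dot x$} assigns $\dot v$ iff $\dot C_1\land\dot C_2$ held just before — and because $\dot x$ inside $\dot C_1$ is replaced by the snapshot $t$, \lstinline{$\dot C_1$} is still evaluated at the right state even after $\dot x$ is updated; the trailing \lstinline{if $C$: $\dddot{Rest}$} then gates $\dddot{Rest}$ on the same (snapshotted) condition, so nested control flow is faithfully flattened. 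For the \emph{multi-assignment transformation} (Def.~\ref{def:multi-transf}): rerouting the intermediate writes to $\dot x$ through fresh $x_1,\dots,x_{l-1}$ and correspondingly renaming occurrences of $\dot x$ in later $v_i, C_i, \dddot{Rest}_i$ is a pure renaming of intermediate SSA-style values; the final write still lands in $\dot x$ with the same value, and all reads see the same values as before, so the joint distribution on $\Vars(\Program)$ is unchanged.

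Throughout, I would make explicit the two structural facts that make these local arguments suffice globally: (i) the transformations only ever introduce \emph{fresh} variables, which by definition lie outside $\Vars(\Program)$, so adding them to the state space is harmless for $\equiv^{\Vars(\Program)}$ — one can formally marginalize them out; and (ii) replacing a subprogram $S$ by a fragment with the same input-output distribution on pre-existing variables, inside an otherwise unchanged program with a trivial guard, yields the same Markov kernel on $\States_\Program$ and hence, by Kolmogorov's Extension Theorem (as invoked in the Remark), the same measure $\P_\Program$ on runs and thus the same joint distribution of $\Vars(\Program)$.

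The main obstacle I anticipate is the \emph{if transformation}: its correctness hinges on the subtle interaction between the snapshot variable $t$ (which freezes the value of $\dot x$ used to test $\dot C_1$), the semantics of conditioned assignments with the \emph{conjoined} guard $\dot C_1 \land \dot C_2$, and the fact that the trailing \lstinline{if $C$: $\dddot{Rest}$} must gate $\dddot{Rest}$ on the \emph{same} condition as the original outer \lstinline{if}. Getting the case analysis right — in particular when $\dot x$ does or does not appear in $\dot C_1$, and when $\dddot{Rest}$ is empty (the two degenerate cases flagged in the definition) — and verifying it commutes correctly when iterated to full flattening, is the delicate part. The other four transformations are essentially bookkeeping once the "fresh variables + same kernel" template is set up, with the distribution transformation additionally requiring only the cited elementary location-scale identities.
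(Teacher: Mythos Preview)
Your proposal is correct and follows the same case-by-case strategy the paper adopts: treat each transformation separately, observe that fresh auxiliary variables are introduced only to snapshot values so that intervening modifications do not interfere, and for the distribution transformation invoke the standard location--scale identities. The paper's own justification is in fact much terser than yours---it merely asserts the lemma as ``not hard to argue'' and, within the proof of Theorem~\ref{thm:NormalForm}, gives two sentences of hints (auxiliary variables store original values; the distribution transformation uses statistical properties of well-known distributions)---so your level of detail exceeds what the paper provides.
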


By exhaustively  applying the $\Program$-preserving transformations of Definitions~\ref{def:simult-transf}-\ref{def:multi-transf} over $\Program$, we obtain a so-called normalized program $\Program_N$, as defined below. The normalized $\Program_N$ will then further be used in computing higher moments of $\Program$ in Sections~\ref{sec:computing-moments}, as the $\Program_N$ preserves the moments of $\Program$ (Theorem~\ref{thm:NormalForm}).  

\begin{definition}[Normal Form]\label{def:normal-form}
A program $\Program$ is in \emph{normal form} or a \emph{normalized} program if none of the transformations from  Definitions~\ref{def:simult-transf}-\ref{def:multi-transf} are applicable to $\Program$.
\end{definition}

\begin{theorem}[Normal Form]\label{thm:NormalForm}
For every probabilistic program $\Program$ there is a $\Program_{\mathcal{N}}$ in normal form such that $\Program \equiv^{\Vars(\Program)} \Program_{\mathcal{N}}$.
Moreover, $\Program_{\mathcal{N}}$ can be effectively computed from $\Program$ by exhaustively applying transformations from Definitions~\ref{def:simult-transf}-\ref{def:multi-transf}.
\end{theorem}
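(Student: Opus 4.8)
The plan is to prove the two assertions — existence of a normalized equivalent program, and its effective computability by exhaustive rewriting — by combining Lemma~\ref{thm:ProgramPreserving} with a termination (well-foundedness) argument for the rewrite system induced by the transformations of Definitions~\ref{def:simult-transf}--\ref{def:multi-transf}. The correctness half is essentially immediate: each individual transformation $T$ is $\Program$-preserving, so $\Program \equiv^{\Vars(\Program)} T(\Program, S)$ for any applicable subprogram $S$; since $\equiv^{X}$ is transitive (equality of joint distributions of $X$ is transitive, and the variable-inclusion condition is preserved because transformations only add fresh variables and never remove variables assigned in the loop body), any finite chain of rewrites $\Program \to \Program_1 \to \dots \to \Program_{\mathcal{N}}$ yields $\Program \equiv^{\Vars(\Program)} \Program_{\mathcal{N}}$. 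When no transformation is applicable, $\Program_{\mathcal{N}}$ is in normal form by Definition~\ref{def:normal-form}. So the substance of the theorem is showing that exhaustive application \emph{terminates}: every maximal rewrite sequence is finite.

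For termination I would exhibit a ranking function $\mu(\Program) \in \N^k$ (ordered lexicographically, hence well-founded) that strictly decreases under every transformation. I expect the components to be, roughly in decreasing priority: (i) the number of simultaneous (tuple) assignments, which only the simultaneous-assignment transformation touches and which it strictly decreases while the other transformations do not create tuple assignments — note the fresh single assignments $t_i = \dot v_i$, $\dot x_i = t_i$ it introduces are \emph{not} tuple assignments; (ii) the number of \lstinline{else}/\lstinline{else if} branches, decreased by the else transformation and untouched by the if- and multi-assignment transformations (the if transformation, applied after desugaring, only pushes conditions into conditioned assignments and does not reintroduce \lstinline{else}); (iii) a measure of the nesting/structure of if-statements that the if transformation strictly decreases — a natural choice is the sum over all conditioned assignments of their syntactic depth inside \lstinline{if} blocks, or equivalently the total number of \lstinline{if} headers weighted by the number of statements they enclose, since the if transformation replaces \lstinline{if $C_1$: $x = v$; Rest end} by \lstinline{$x = v$; if $C$: Rest end}, strictly shrinking this quantity; (iv) for the multi-assignment transformation, the number of variables $x$ having two or more conditioned assignments in the (now flat) loop body, which it strictly decreases by one (the introduced $x_1,\dots,x_{l-1}$ each receive exactly one assignment, and $x$ itself is left with one) and which is unaffected once components (i)--(iii) have stabilized.

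The main obstacle is verifying that each transformation decreases its designated component \emph{without increasing any higher-priority component}, and that the components are genuinely well-defined natural numbers — in particular that the distribution transformation of Definition~\ref{def:dist-transf} (also covered by Lemma~\ref{thm:ProgramPreserving} but, strictly, it is listed among Definitions~\ref{def:simult-transf}--\ref{def:multi-transf}) either fits the scheme or must be argued to be applied only boundedly often, since it does not obviously fit a simple counting measure unless one observes that each non-constant distribution parameter can be eliminated at most once per original syntactic occurrence. I would handle this by adding, as the top-priority component, the number of \lstinline{Dist($\dots$)} occurrences whose parameters are non-constant; the distribution transformation strictly decreases it (the fresh draw \lstinline{t = Normal(0,$\dot v$)} etc.\ has constant location/scale parameters up to the scale $\dot v$, which per our standing assumption is itself constant after the relevant rewrites, or is handled by the same measure), and no other transformation introduces new distribution draws. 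Once the lexicographic decrease is checked transformation-by-transformation, well-foundedness of $\N^k$ under the lexicographic order gives termination; combined with the $\equiv^{\Vars(\Program)}$-preservation and transitivity argument above, this yields both the existence of $\Program_{\mathcal{N}}$ and its effective computation by exhaustively applying the transformations.
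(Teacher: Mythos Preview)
Your proposal is correct and matches the paper's proof in all essentials: correctness via Lemma~\ref{thm:ProgramPreserving} plus transitivity of $\equiv^{\Vars(\Program)}$, and termination via a lexicographic ranking function on a tuple of syntactic counters (simultaneous assignments, non-constant distribution draws, \texttt{else}-branches, an if-nesting weight, and a multi-assignment count). The paper uses the order $(\textit{Sim},\textit{Dist},\textit{Else},\textit{If},\textit{Multi}_B,\textit{Multi}_I)$ whereas you place the distribution component first; both orderings work since the simultaneous-assignment and distribution transformations do not increase each other's counters, so this is an inessential difference.
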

\begin{proof}
There are two claims in the theorem, which we need to address: (i) exhaustively applying transformation rules terminates (\emph{termination}), and (ii) it preserves statistical properties of the (original) program variables (\emph{correctness}). 

For termination, we show that programs become smaller, in some sense, after every transformation. In particular, we consider the program size to be given by a tuple ($\textit{Sim}$, $\textit{Dist}$, $\textit{Else}$, $\textit{If}$, $\textit{Multi}_B$, $\textit{Multi}_I$), representing the number of simultaneous assignments, non-trivial distributions, else statements, assignments within if branches (weighted for nested ifs), and number of variables with multiple assignments in the loop body and initialization part, respectively. With respect to the lexicographic order, each transformation reduces the size of the program which is lower-bounded by~$0$.

Correctness can be shown by treating each transformation separately and showing that it does not alter the variables' distributions after a single application (Lemma~\ref{thm:ProgramPreserving}).
This is true for all transformations from Definitions~\ref{def:simult-transf}-\ref{def:multi-transf}.
Auxiliary variables are used to store the original value to prevent intervening variable modifications.
For \nameref*{def:multi-transf} (Definition~\ref{def:multi-transf}), we also revise the rest of the assignments to reflect the change of the original variable. The \nameref*{def:dist-transf} (Definition~\ref{def:dist-transf}) uses statistical properties of well-known distributions.
\end{proof}

\paragraph{Properties of Normalized Programs.}
Figure~\ref{fig:running-normalized} shows a normal form for the program from Figure~\ref{fig:running-example}.
Normalized programs have the following important properties: (1) all distribution parameters are constant; (2) the loop body is a sequence of guarded assignments; (3) every program variable is only assigned once in the loop body.
Moreover for every guarded assignment \lstinline{v = $\textit{assign}_{\textit{true}}$ [C] $\textit{assign}_{\textit{false}}$}, the guard $C$ is a boolean condition and $\textit{assign}_{\textit{false}}$ is a single variable which is assigned to \texttt{v} if $C$ evaluates to \emph{false}. If $C$ evaluates to \emph{true}, the variable \texttt{v} is assigned $\textit{assign}_{\textit{true}}$.
The expression $\textit{assign}_{\textit{true}}$ is either a distribution with constant parameters or a probabilistic choice of polynomials as illustrated in Figure~\ref{fig:running-normalized}.

\begin{remark}
Based on the order in which transformations are applied to a program $\Program$ and the names used for auxiliary variables, several different normalized programs can be achieved for $\Program$.
In this work, only the existence of a normal form is relevant.
Moreover, from the definitions of the transformation, it is apparent that exhaustively applying them leads to a normal form whose size is linear in the size of the original program.
\end{remark}

\begin{figure}
    \centering
    \noindent
    \begin{minipage}{0.5\linewidth}
        \begin{lstlisting}[basicstyle=\footnotesize]
            toggle = 0; sum = s0
            x = 1; y = 1; z = 1
            while $\star$:
              toggle = 1-toggle
              x = 1+x {1/2} 2+x [toggle==0] x
              y = y+z+x**2 {1/3} -x+y-z [toggle==0] y
              z = z+y {1/4} z-y [toggle==0] z
              t1 = Laplace(0, 1)
              l = t1+x+y
              g = Normal(0, 1)
              sum = sum+x [g < 1/2] sum
            end
        \end{lstlisting}
    \end{minipage}
    \caption{A normal form for the program in Figure~\ref{fig:running-example}.}
    \label{fig:running-normalized}
\end{figure}

\section{Finite Types in Probabilistic Programs}\label{sec:types}

Given a probabilistic program $\Program$ in our programming model, 
the transformations of Section~\ref{ssec:PP-transforms} simplify $\Program$ by computing its normalized form while maintaining the distribution (and hence also moments) of $\Program$. Nevertheless, the normalized form of $\Program$ contains computationally expensive polynomial arithmetic, potentially hindering the automated analysis of $\Program$ in Section~\ref{sec:computing-moments} due to a computational blowup.
Therefore, we introduce further simplifications for $\Program$ by means of power reduction techniques.

\begin{example} 
Consider Figure~\ref{fig:running-example} and assume we are interested in the $k$th power of variable $\textit{toggle}$ and deriving the raw moment $\E(\textit{toggle}_n^k)$. 
Our analysis relies on replacing variables with their assignments (see Section~\ref{sec:computing-moments}), leading to the expression $(1 - \textit{toggle}_{n-1})^k$. When expanded, this is a polynomial in $\textit{toggle}_{n-1}$ with $k{+}1$ monomials:
\begin{equation*}
    (1 - \textit{toggle}_{n-1})^k = \sum_{i=0}^{k} \binom{k}{i} (-1)^i \textit{toggle}_{n-1}^{i}
\end{equation*}
Higher moments, together with the aforementioned replacements, may lead to blowups of the number of monomials to consider. However, observing that the variable $\textit{toggle}$ is binary, we have $\textit{toggle}_n^k = \textit{toggle}_n = 1 - \textit{toggle}_{n-1}$ for any $k \geq 0$. 
Arbitrary powers of the finite variable $\textit{toggle}$ with~$2$ possible values can be written in terms of powers smaller than~$2$.
In the rest of this section, we show that this phenomenon generalizes from binary variables to arbitrary finite valued variables, thus simplifying the analysis of higher moments of finite valued program variables. 
\end{example}

As defined in Definition~\ref{def:run-process}, for an arithmetic expression $X$ over program variables, $X_n$ denotes the stochastic process mapping a program run to the value of $X$ after iteration $n$.

\begin{definition}[Finite Expression]
Let $\Program$ be a probabilistic program and $X$ an arithmetic expression over the variables of $\Program$. We say that 
$X$ is \emph{finite} if there exist $a_1,\dots,a_m \in \R$ such that for all $n \in \N$ : $X_n \in \{ a_1,\dots, a_m \}$.
\end{definition}

\subsection{Power Reduction for Finite Types}\label{ssec:power-reduction}

As established in~\cite[Lemma 1]{Bartocci2020}, high powers $k$ of a random variable $X$ over a finite set can be reduced.
We adapt their result to our setting as follows.

\begin{theorem}[Finite Power Reduction]\label{thm:PowerReduction}
Let $m,k\in\Z$ and $X$~be a discrete random variable over $A=\{a_1,\dots,a_m\}$. 
Then we can rewrite $X^k$ as a linear combination of $1, X, X^2, \cdots, X^{m-1}$. Furthermore, 
\begin{equation}\label{eq:closed-form}
X^k =  \overline {a^k} M^{-1} \overline X,
\end{equation}
where 
$\overline {a^k} = (a_1^k, \dots, a_m^k)$,
$M$ is an $m\times m$ matrix with $M_{ij} = a_j^{i-1}$ (with $0^0:=1$), and
$\overline X = (X^0, \dots, X^{m-1})^T$. 
\end{theorem}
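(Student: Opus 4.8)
The plan is to prove the identity in Equation~\eqref{eq:closed-form} by a direct linear-algebra argument exploiting that $X$ takes only finitely many values. The key observation is that, conditioned on the event $X = a_j$, the vector $\overline X = (X^0, \dots, X^{m-1})^T$ equals the $j$th column of the matrix $M$ read from top to bottom, namely $(a_j^0, \dots, a_j^{m-1})^T = (M_{1j}, \dots, M_{mj})^T$. So $\overline X$ is, pointwise on the sample space, always one of the $m$ columns of $M$. The matrix $M$ is a Vandermonde matrix (up to transposition), and here I would want to assume the $a_i$ are pairwise distinct --- which we may do without loss of generality, since repeated values can be merged --- so that $M$ is invertible and $M^{-1}$ is well defined.

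First I would argue the scalar/deterministic version of the claim: for each fixed $j \in \{1,\dots,m\}$, we have $a_j^k = \overline{a^k}\, M^{-1} (M_{1j}, \dots, M_{mj})^T$. This is immediate: $(M_{1j},\dots,M_{mj})^T$ is the $j$th column of $M$, so $M^{-1}$ applied to it is the $j$th standard basis vector $e_j$, and $\overline{a^k}\, e_j = a_j^k$ by definition of $\overline{a^k}$. Next I would lift this to the random variable: on the event $\{X = a_j\}$ we simultaneously have $X^k = a_j^k$ and $\overline X = (M_{1j},\dots,M_{mj})^T$, hence $X^k = \overline{a^k}\, M^{-1} \overline X$ holds on that event. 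Since the events $\{X = a_j\}$ for $j = 1,\dots,m$ partition the sample space (as $X$ is supported on $A$), the identity $X^k = \overline{a^k}\, M^{-1}\overline X$ holds everywhere, i.e.\ as an identity of random variables (indeed pointwise). Writing $\overline{a^k}\, M^{-1} = (c_0, \dots, c_{m-1})$ then gives $X^k = \sum_{i=0}^{m-1} c_i X^i$, which is exactly the asserted representation of $X^k$ as a linear combination of $1, X, \dots, X^{m-1}$; the coefficients $c_i$ are constants (they depend only on $k$ and the $a_i$, not on the randomness), which is the content of the first sentence of the theorem.

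One subtlety to address is the hypothesis range: the statement says $m, k \in \Z$, but negative $k$ only makes sense if $0 \notin A$, and the case $m \le 0$ is vacuous; I would simply note that $k$ ranges over integers for which $X^k$ is well defined and that the argument above goes through verbatim for any such $k$ --- the only place $k$ enters is through the row vector $\overline{a^k}$, and the algebra $\overline{a^k}\, M^{-1} e_j = a_j^k$ is valid regardless. I would also explicitly invoke the cited result~\cite[Lemma 1]{Bartocci2020} as the source of the general phenomenon, presenting the Vandermonde formula as the concrete instance adapted to our setting.

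The main obstacle is essentially bookkeeping rather than a deep difficulty: one must handle the degenerate cases (repeated $a_i$ making $M$ singular; $0 \in A$ with $k < 0$) cleanly, and one must be careful that the claimed identity is an identity of random variables valid on every sample point, not merely in expectation --- this is what makes it useful for the power-reduction step in Section~\ref{sec:computing-moments}, where it is substituted inside recurrences before taking expectations. Everything else reduces to the single observation that $M^{-1}$ sends the $j$th column of $M$ to $e_j$.
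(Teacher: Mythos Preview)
Your argument is correct and self-contained: the key observation that on the event $\{X=a_j\}$ the vector $\overline X$ equals the $j$th column of $M$, so that $M^{-1}\overline X = e_j$ and hence $\overline{a^k}M^{-1}\overline X = a_j^k = X^k$, is exactly the right one, and your handling of the edge cases (distinct $a_i$ needed for invertibility, $0\in A$ with negative $k$, pointwise validity) is appropriate.

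By way of comparison, the paper does not actually give a proof of this theorem at all: it simply attributes the reduction phenomenon to \cite[Lemma~1]{Bartocci2020} and states the formula as an adaptation. The only thing the paper proves in this vicinity is Theorem~\ref{thm:reduction-formula}, which supplies an explicit closed form for the entries of $M^{-1}$ via elementary symmetric polynomials. So your proposal is strictly more than what the paper does for this statement --- you supply a direct Vandermonde argument where the paper relies on a citation --- and nothing in your write-up is at odds with the paper's treatment.
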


In other words, Theorem~\ref{thm:PowerReduction} implies that any higher moment of $X$, can be computed from just its first $m{-}1$ moments.
Furthermore, we build on Theorem~\ref{thm:PowerReduction} and establish the inverse of matrix $M$ explicitly ($M$ is explicit in Theorem ~\ref{thm:PowerReduction} but its inverse is implicit).

\begin{theorem}[Reduction Formula]\label{thm:reduction-formula}
Recall that the $k$th elementary symmetric polynomial with respect to a set $V=\{v_1,\dots,v_n\}$ is $\displaystyle e_k(V) = \sum_{1\le j_1<\cdots <j_k\le n} v_{j_1}\cdots v_{j_k}$ and let $A_{-j} = A\setminus \{a_j\})$. Then the inverse of~$M$ in~\eqref{eq:closed-form} is given by
\begin{equation}\label{inverseFormula}
M^{-1}_{ij} = - \frac{(-1)^{j} e_{m-j}(A_{-i})}{\prod_{a\in A_{-i}}(a-a_i)}.
\end{equation}
\end{theorem}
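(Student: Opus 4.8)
The matrix $M$ with $M_{ij} = a_j^{i-1}$ is the Vandermonde matrix on the nodes $a_1,\dots,a_m$ (in transposed form), so the statement is really an explicit inversion formula for a Vandermonde matrix. The plan is to verify directly that the proposed matrix $N$ with entries $N_{ij} = -(-1)^j e_{m-j}(A_{-i})/\prod_{a\in A_{-i}}(a-a_i)$ satisfies $MN = I_m$; since $M$ is square and invertible (the $a_j$ are pairwise distinct — note that if two nodes coincided, $M$ would be singular and no inverse would exist, so this distinctness is implicitly assumed), a one-sided inverse is the inverse. Concretely, I would compute the $(k,j)$ entry of $MN$, namely $\sum_{i=1}^m M_{ki} N_{ij} = \sum_{i=1}^m a_i^{k-1}\cdot\bigl(-(-1)^j e_{m-j}(A_{-i})\bigr)/\prod_{a\in A_{-i}}(a-a_i)$, and show it equals $\delta_{kj}$.

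The key device is the Lagrange interpolation identity. For the node set $A=\{a_1,\dots,a_m\}$, the Lagrange basis polynomials are $L_i(x) = \prod_{a\in A_{-i}}(x-a)/\prod_{a\in A_{-i}}(a_i-a)$, and expanding the numerator in elementary symmetric polynomials gives $\prod_{a\in A_{-i}}(x-a) = \sum_{t=0}^{m-1} (-1)^{m-1-t} e_{m-1-t}(A_{-i})\, x^{t}$. Note $\prod_{a\in A_{-i}}(a_i-a) = (-1)^{m-1}\prod_{a\in A_{-i}}(a-a_i)$, so up to the global sign bookkeeping the entries $N_{ij}$ are exactly the coefficients appearing when one writes the $j$-th coordinate extraction through Lagrange interpolation. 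The identity I want is the standard fact that for any polynomial $p$ of degree $< m$, $p(x) = \sum_{i=1}^m p(a_i) L_i(x)$; applying this to the monomials $p(x)=x^{k-1}$ for $k=1,\dots,m$ and reading off the coefficient of $x^{j-1}$ yields precisely $\sum_{i=1}^m a_i^{k-1}\,[\text{coeff of }x^{j-1}\text{ in }L_i] = \delta_{kj}$. After matching signs, $[\text{coeff of }x^{j-1}\text{ in }L_i]$ is exactly $N_{ij}$, which closes the argument.

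The only real obstacle is the sign and index bookkeeping: keeping straight the shift between $x^{t}$ and $x^{j-1}$ (so $t=j-1$), the factor $(-1)^{m-1-t}$ from the elementary-symmetric expansion, the factor $(-1)^{m-1}$ from reversing $\prod(a_i-a)$ to $\prod(a-a_i)$, and checking these combine to give the stated $-(-1)^j$ prefactor. I would carry this out by fixing a concrete small case (say $m=2$ or $m=3$) to pin down the signs, then do the general computation. An alternative, essentially equivalent route is to invoke the known closed form for the inverse Vandermonde matrix in terms of elementary symmetric polynomials and simply match conventions; but the self-contained Lagrange-interpolation verification is cleaner and avoids citing an external normalization. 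A final sentence would remark that, combined with Theorem~\ref{thm:PowerReduction}, this makes the reduction $X^k = \sum_{j=1}^m \bigl(\sum_{i} a_i^k N_{ij}\bigr) X^{j-1}$ fully explicit.
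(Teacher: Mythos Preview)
Your proposal is correct, and the underlying identity (the Vieta expansion $\prod_{a\in A_{-i}}(x-a)=\sum_{t=0}^{m-1}(-1)^{m-1-t}e_{m-1-t}(A_{-i})\,x^t$) is the same one the paper uses, but the two arguments are organized differently. The paper verifies $NM=I$: for fixed $i,j$ it sums over the middle index $k$, factors out the denominator, and recognizes the remaining sum as $\prod_{a\in A_{-i}}(a-a_j)$ directly via Vieta; this product is visibly $0$ when $j\neq i$ (since $a_j\in A_{-i}$) and equals the denominator when $j=i$. Your route instead verifies $MN=I$: you identify $N_{ij}$ as the $x^{j-1}$-coefficient of the Lagrange basis polynomial $L_i$ and then invoke the Lagrange interpolation theorem $\sum_i a_i^{k-1}L_i(x)=x^{k-1}$. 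The paper's computation is shorter and more self-contained (no appeal to Lagrange interpolation, just a single Vieta factorization per entry), while your approach has the conceptual payoff of explaining \emph{why} the formula looks the way it does---$N$ is precisely the change-of-basis matrix from Lagrange basis to monomial basis. Either is perfectly valid; if you want the minimal write-up, the paper's $NM$ computation avoids the sign bookkeeping you flagged as the main obstacle.
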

\begin{proof}
Let $MN=B$ for $M$ as of \eqref{eq:closed-form} and $N$ as of \eqref{inverseFormula}. We show that $B=I$ by showing that $B_{ij}=1$ if $i=j$ and $B_{ij}=0$ otherwise.
We have 
\begin{align*}
\begin{split}
B_{ij} 
= \sum_{1\le k\le m} N_{ik} M_{kj} 
&= \sum_{1\le k\le m}  - \frac{(-1)^{k} e_{m-k}(A_{-i})}{\prod_{a\in A_{-i}}(a-a_i)}  a_j^{k-1} \\
&= \frac{1}{\prod_{a\in A_{-i}}(a-a_i)} \sum_{1\le k\le m} - (-1)^{k} a_j^{k-1} e_{m-k}(A_{-i}) \\
&= \frac{1}{\prod_{a\in A_{-i}}(a-a_i)} \sum_{0\le k\le m-1} (-a_j)^{k} e_{m-k-1}(A_{-i})  \\
&= \frac{1}{\prod_{a\in A_{-i}}(a-a_i)} \prod_{a\in A_{-i}}(a-a_j),
\end{split}
\end{align*}
where the last equation comes from the expansion of product $\prod_{a\in A_{-i}}(a-a_j)$ and grouping by the exponent of $a_j$. We can clearly see that the last expression is $1$ if $i=j$ and $0$ otherwise.
\end{proof}

\begin{example}
Let $X$ be a random variable over $A := \{-2,0,1,3\}$.
Using Theorem~\ref{thm:PowerReduction}-\ref{thm:reduction-formula}, we obtain the $10$th power of $X$ as: 
\begin{flalign*}
\begin{split}
    X^{10} &= \begin{pmatrix}(-2)^{10} & 0^{10} & 1^{10} & 3^{10}\end{pmatrix}
    \begin{pmatrix}
    0 & \nicefrac{-1}{10} & \nicefrac{2}{5} & \nicefrac{-1}{30} \\
    1 & \nicefrac{-5}{6} & \nicefrac{-1}{3} & \nicefrac{1}{6} \\
    0 & 1 & \nicefrac{1}{6} & \nicefrac{-1}{6} \\
    0 & \nicefrac{-1}{15} & \nicefrac{1}{30} & \nicefrac{1}{30}
    \end{pmatrix}
    \begin{pmatrix} X^0 \\ X^1 \\ X^2 \\ X^3 \end{pmatrix}
    \\
    &= 1934 X^3 + 2105 X^2 - 4038 X.
\end{split}
\end{flalign*}
\end{example}

\section{Computing Higher  Moments of Probabilistic Programs}\label{sec:computing-moments}

We now bring together the results from Sections~\ref{sec:model}-\ref{sec:types} to develop the \emph{theory of moment-computability} for probabilistic loops.
We establish the technical details leading to sufficient conditions that ensure moment-computability, culminating in the proof of Theorem~\ref{thm:moment-computability}.
The main ideas of our method are illustrated on the probabilistic loop from Figure~\ref{fig:running-example} in Example~\ref{ex:recurrences} at the end of this section.

\begin{definition}[Moment-Computability]\label{moment-computability}
A probabilistic loop $\Program$ is \emph{moment-computable} if a closed-form (according to Theorem~\ref{thm:closed-form}) of $\E(x_n^k)$ exists and is computable for all $x \in \Vars(\Program)$ and $k \in \N$.
\end{definition}

We will describe the class of moment-computable probabilistic loops through the properties of the dependencies between program variables.

\begin{definition}[Variable Dependency]\label{def:dependency}
	Let $\Program$ be a probabilistic loop and $x,y \in \Vars(\Program)$.
	We define:
	\begin{itemize}
	    \item $y$ \emph{depends conditionally on} $x$, if there is an assignment of $y$ within an if-else-statement and $x$ appears in the if-condition.
	    
	    \item $y$ \emph{depends finitely on} $x$, if $x$ is finite and appears in an assignment of $y$.
	    
	    \item $y$ \emph{depends linearly on} $x$, if $x$ appears only linearly in every assignment of $y$.
	    
		\item $y$ \emph{depends polynomially on} $x$, if there is an assignment of $y$ in which $x$ appears non-linearly and $x$ is not finite (motivated by Section~\ref{ssec:power-reduction}).
		
		\item $y$ \emph{depends on} $x$ if it depends on $x$ conditionally, finitely, linearly, or polynomially.
	\end{itemize}
	Furthermore, we consider the transitive closure for variable dependency as follows:
	If $z$ depends on $y$ and $y$ depends on $x$, then $z$ depends on $x$.
	If one of the two dependencies is polynomial, then $z$ depends polynomially on $x$.
\end{definition}

A crucial point to highlight in Definition~\ref{def:dependency} is that due to transitivity, variables can depend on themselves.
For instance, if variable $x$ depends on $y$ and $y$ on $x$, then $x$ is self-dependent.
Moreover, if either of the dependencies between $x$ and $y$ is non-linear, $x$ depends, by Definition~\ref{def:dependency}, \emph{polynomially on itself.}
The absence of such polynomial self-dependencies is a central condition for our notion of moment-computable loops.

\begin{theorem}[Moment-Computability]\label{thm:moment-computability}
    A probabilistic loop $\Program$ is moment-computable if \emph{(1) none of its non-finite variables depends on itself polynomially, and (2) if the variables in all if-conditions are finite.}
\end{theorem}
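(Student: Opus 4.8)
The plan is to reduce the problem to solving a system of linear recurrences with constant coefficients over a carefully chosen finite set of "monomial" quantities, and then invoke Theorem~\ref{thm:closed-form}. By Theorem~\ref{thm:NormalForm} we may assume $\Program$ is in normal form, so its loop body is a flat list of guarded assignments \lstinline{v = $\mathit{assign}_{\mathrm{true}}$ [$C$] $\mathit{assign}_{\mathrm{false}}$}, each variable assigned once, all distribution parameters constant, and every guard $C$ a boolean combination of comparisons over finite variables (this last point uses hypothesis (2) together with Definition~\ref{def:dependency}: a guard over finite variables cannot make a non-finite variable depend conditionally on anything problematic, and guards themselves only involve finite quantities). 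The key object to track is, for each iteration $n$, the expected value $\E(M_n)$ of every \emph{monomial} $M = \prod_i x_i^{k_i}$ that arises. The goal is to show that the (countable a priori) set of such monomials needed can be organized into finitely many "levels" on which the induced map $\E(M_n) \mapsto \E(M_{n+1})$ is linear with constant coefficients.

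The main steps, in order. \textbf{Step 1 (one-step substitution).} For a monomial $M$ over the loop variables, express $\E(M_{n+1})$ by substituting each variable's guarded assignment. A guarded assignment splits on the finitely many value-combinations of the finite guard variables; on each branch the assigned value is either a probabilistic choice of polynomials or a distribution with constant parameters, so taking expectations replaces it by a polynomial in the other variables plus constants (the distribution case contributes constant moments, independent of the state — this is where constant parameters matter). Multiplying these out, $\E(M_{n+1})$ becomes a \emph{finite linear combination} of expectations of other monomials $\E(M'_n)$, with constant coefficients. \textbf{Step 2 (power reduction / finiteness closure).} Whenever a variable $x$ in some $M'$ is finite, apply Theorem~\ref{thm:PowerReduction} to rewrite any power $x^{k}$ with $k \ge m$ (where $|\supp(x)| = m$) as a constant-coefficient linear combination of $1, x, \dots, x^{m-1}$; do likewise for finite arithmetic subexpressions appearing via the guard case-split (the indicator of a guard is itself a finite expression, hence a fixed polynomial in the finite guard variables). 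This bounds the exponent of every finite variable appearing in any monomial we must track. \textbf{Step 3 (stratification by a well-founded rank).} Define a rank on monomials using the variable dependency order of Definition~\ref{def:dependency}: since no non-finite variable depends on itself polynomially, the non-finite variables admit a topological layering in which a non-finite variable $y$'s assignment is \emph{linear} in any non-finite variable at $y$'s own layer or above, and polynomial only in variables strictly below (which, by assumption, cannot include $y$ itself) or in finite variables (handled in Step~2). Show by induction on this layering, together with the bounded finite-exponents from Step~2 and bounded total degree $\le k$, that the closure of $\{M\}$ under the one-step map of Step~1 is \emph{finite}. \textbf{Step 4 (assemble and solve).} The finite set $\mathcal{M}$ from Step~3, ordered compatibly with the rank, yields a linear system $\E(\mathbf{M}_{n+1}) = A\,\E(\mathbf{M}_n) + \mathbf{b}$ with constant matrix $A$ and constant vector $\mathbf{b}$ (the inhomogeneous part from distributions); this is a C-finite system, so by Theorem~\ref{thm:closed-form} every $\E(x_n^k) = \E(M_n)$ for $M = x^k$ has a computable exponential-polynomial closed form. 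The initial values come from the distribution of states after the initialization segment, which is itself finitely computable in the same way.

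The main obstacle is \textbf{Step 3}: proving the closure is finite. The danger is a cycle of the substitution map that keeps generating monomials of ever-increasing degree — exactly what a polynomial self-dependence of a non-finite variable would cause. The hypotheses are tailored to block this: linearity of non-finite variables in variables at or above their own layer means substitution cannot raise the degree \emph{in those variables}, only redistribute it or push degree "downward" onto strictly-lower (hence acyclic) layers or onto finite variables where Step~2 caps the exponent. Making this precise requires the right multivariate rank function — something like a vector of (total degree, then exponents of non-finite variables read top layer to bottom, then exponents of finite variables) compared lexicographically — and a careful check that each application of Step~1 followed by Step~2 does not increase this rank, so that only finitely many monomials of rank $\le \mathrm{rank}(x^k)$ ever appear. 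I would also need to verify the mild technical point that within a single layer the system among equal-layer non-finite variables, being linear, is genuinely closed (simultaneous/same-iteration dependencies were already removed by normalization, so same-layer references are to previous-iteration values, keeping coefficients constant). The necessity of both hypotheses — addressed in Section~\ref{ssec:necessity} — confirms that this obstacle is the essential one rather than an artifact of the proof.
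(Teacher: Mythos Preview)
Your proposal follows essentially the same route as the paper: normalize, express $\E(M_{n+1})$ as a constant-coefficient linear combination of $\E(N_n)$ over finitely many monomials $N$, cap finite-variable exponents via Theorem~\ref{thm:PowerReduction}, argue a finite closure using a well-founded order tied to the dependency structure, and solve the resulting C-finite system. Your identification of Step~3 as the crux and your diagnosis (``substitution cannot raise the degree in variables at or above the current layer, only push it downward or onto finite variables'') match the paper's argument exactly.

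There is one concrete glitch in your proposed rank function, though. You suggest ``(total degree, then exponents of non-finite variables read top layer to bottom, \dots)'' lexicographically, and claim Step~1 followed by Step~2 does not increase it. But total degree \emph{can} increase: if a non-finite $x$ has the assignment $x = y^2$ with $y$ in a strictly lower layer (allowed by hypothesis~(1)), then substituting into $x^k$ produces $y^{2k}$, and total degree jumps from $k$ to $2k$. So with total degree as the leading component your rank is not non-increasing, and ``only finitely many monomials of rank $\le \mathrm{rank}(x^k)$'' fails. Relatedly, ``bounded total degree $\le k$'' in your Step~3 is not true.

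The paper's remedy is to drop total degree entirely and order monomials only by the vector of degrees in each \emph{equivalence class} of mutually dependent non-finite variables, read from the top class downward and compared (reverse-)lexicographically. Then every $N \in M^*$ satisfies $N \preceq M$: the degree in $M$'s own class and all higher classes cannot grow (dependencies there are linear), while any growth lands in strictly lower classes, which is irrelevant lexicographically. Since $\preceq$ is only a preorder (monomials in the same class-degree vector are $\sim$-equivalent), the paper shows each equivalence class $\widetilde{M}$ is finite and finishes by transfinite induction on the well-order of classes. Your intuition was right; only the bookkeeping device needs this adjustment.
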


Note that none of the program transformations from Section~\ref{ssec:PP-transforms} can introduce a polynomial (self-)dependence.
We capture this in the following lemma:
\begin{lemma}[Non-Dependency Preservation]\label{lemma:simplicitypreservation}
    If a variable $x \in \Vars(\Program)$ does not depend on itself polynomially, neither does $x \in \Vars(\Program_{\mathcal{N}})$.
\end{lemma}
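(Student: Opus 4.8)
The plan is to show that none of the five transformations from Definitions~\ref{def:simult-transf}--\ref{def:multi-transf} can create a polynomial self-dependence that was not already present in $\Program$. Since $\Program_{\mathcal{N}}$ is obtained from $\Program$ by finitely many applications of these transformations (Theorem~\ref{thm:NormalForm}), it suffices to prove the single-step statement: if $T$ is any of the transformations and $\Program' = T(\Program, S)$, then whenever $x \in \Vars(\Program') \cap \Vars(\Program)$ depends polynomially on itself in $\Program'$, it already does so in $\Program$. Transitivity of the dependency relation then lifts the single-step claim to the full normalization. I would phrase the inductive invariant over all variables (including fresh auxiliary ones), tracking for each variable $v$ the set of variables it polynomially depends on, and argue this set does not "grow spuriously".

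\textbf{Key steps.} First I would fix notation: for a program $\Program$ and variables $x, y$, write $x \rightsquigarrow y$ for "$x$ depends on $y$" and $x \overset{p}{\rightsquigarrow} y$ for the polynomial variant, noting that by Definition~\ref{def:dependency} a polynomial dependency along a chain persists under transitivity. Then I would treat each transformation in turn. For the \nameref*{def:simult-transf} and \nameref*{def:multi-transf}: a fresh variable $t_i$ (resp.\ $x_i$) is assigned exactly the right-hand side $\dot v_i$ (resp.\ a copy thereof with $\dot x$ renamed), so $t_i$ depends (polynomially or not) on precisely the variables that $\dot v_i$ did, and the final assignment $\dot x = t_l$ is \emph{linear} in $t_l$ — so any polynomial dependency of $x$ on a variable $w$ in $\Program'$ factors through a polynomial dependency of some $\dot v_i$ on $w$ in $\Program$. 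For the \nameref*{def:dist-transf}: the new variable $t$ is assigned a distribution with constant parameters (no variable dependency at all), and $\dot x$ is assigned $\dot p + t$ (or $\dot p * t$, or $\dot p_1 + (\dot p_2 - \dot p_1)*t$), in which $t$ appears only linearly; hence $x$'s dependencies are exactly those of $\dot p$ (resp.\ $\dot p_1, \dot p_2$), which already occurred in $\Program$, and no self-loop through $t$ can be polynomial. For the \nameref*{def:else-transf} and \nameref*{def:if-transf}: these only duplicate or conjoin \emph{conditions} and copy variable values into fresh $t_i$'s via linear assignments $t_i = x_i$; conditional dependencies stay conditional (never become polynomial by Definition~\ref{def:dependency}), and the copies introduce only linear dependencies, so again no new polynomial edge — in particular no new polynomial self-edge — is created.

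\textbf{Main obstacle.} The delicate point is handling \emph{transitivity} correctly: a transformation introduces fresh variables, so a self-dependency $x \overset{p}{\rightsquigarrow} x$ in $\Program'$ may be witnessed by a chain $x \rightsquigarrow t_i \rightsquigarrow \cdots \rightsquigarrow x$ that passes through the new variables, and I must show this chain "projects" to a chain in $\Program$ of at least the same strength. The clean way is to define, for each fresh variable, which original-program variable it "stands for" (e.g.\ $t_i$ in \nameref*{def:simult-transf}/\nameref*{def:multi-transf} stands for the value of $\dot v_i$, $t$ in \nameref*{def:dist-transf} contributes nothing, $t_i = x_i$ in the if/else rules stands for $x_i$), then verify that contracting the fresh variables in any $\Program'$-dependency chain yields a valid $\Program$-dependency chain whose polynomial-ness is preserved — the key quantitative fact being that every assignment \emph{to} a fresh variable, and every assignment rewritten in terms of a fresh variable, is linear in that fresh variable, so contraction never downgrades a polynomial edge and never upgrades a linear one. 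Once this projection lemma is in place, the self-dependency case $x = $ endpoint of the chain is immediate, and the lemma follows.
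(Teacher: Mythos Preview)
Your approach is correct and is essentially the only reasonable one: argue that each of the five transformations in Definitions~\ref{def:simult-transf}--\ref{def:multi-transf} cannot introduce a new polynomial self-dependency, and conclude by induction over the finite sequence of transformations producing $\Program_{\mathcal N}$. Your identification of the main obstacle---that a self-dependency in $\Program'$ may route through fresh auxiliary variables and must be ``projected'' back to a chain in $\Program$ of at least the same polynomial strength---is exactly the right point to isolate, and your observation that every assignment to or through a fresh variable is linear in that fresh variable is the key quantitative fact that makes the projection work.

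It is worth noting, however, that the paper does not actually give a proof of this lemma. It merely states, just before the lemma, that ``none of the program transformations from Section~\ref{ssec:PP-transforms} can introduce a polynomial (self-)dependence,'' and leaves the verification implicit. So there is nothing to compare against: your write-up is strictly more detailed than what the paper provides, and what you have sketched is precisely the argument the paper takes for granted. If anything, you might streamline the final write-up by collapsing the five case analyses into the single structural observation you already made---every transformation either (i) introduces a fresh variable via a \emph{linear} assignment, or (ii) rewrites an existing assignment by substituting a fresh variable \emph{linearly} for an old one, or (iii) manipulates conditions only---since that observation alone suffices for the projection lemma.
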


Before we prove  Theorem~\ref{thm:moment-computability}, let us first show its validity  for programs in normal form (as defined in Section~\ref{ssec:PP-transforms}).
Recall that a normalized program's loop body is a flat list of (guarded) assignments, one for every (possibly auxiliary) program variable.

\begin{lemma}[Normal Moment-Computability]\label{lemma:moment-computability}
    The Moment-Computability Theorem (Theorem~\ref{thm:moment-computability}) holds for loops in normal form.
\end{lemma}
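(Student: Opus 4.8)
\textbf{Proof proposal for Lemma~\ref{lemma:moment-computability}.}

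The plan is to show that for a normalized loop $\Program$ satisfying conditions (1) and (2) of Theorem~\ref{thm:moment-computability}, the raw moments $\E(x_n^k)$ for all $x \in \Vars(\Program)$ and all $k \in \N$ jointly satisfy a finite system of linear recurrences with constant coefficients, whose closed-forms then exist and are computable by the results recalled in Section~\ref{ssec:recurrences}. The key observation is that in a normalized program the loop body is a flat list of guarded assignments \lstinline{v = $\textit{assign}_{\textit{true}}$ [C] $d_v$}, one per variable, so that $v_n$ can be written explicitly in terms of the $x_{n-1}$ for variables $x$: after one iteration, $v_n$ equals $(\textit{assign}_{\textit{true}})_n$ on the event that $C$ holds and $d_v$ otherwise, and we can encode the indicator $[C]$ using products of indicators of atomic comparisons over the (by hypothesis finite) variables appearing in $C$.

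First I would set up the right \emph{monomial basis}. For each variable $v$, one wants $\E(v_n^k)$, but substituting the assignment for $v$ and expanding introduces cross-monomials $\E(\prod_i x_{n-1}^{k_i})$ in the other variables. So the natural object is the set of all \emph{mixed moments} $\E(\prod_{x} x_n^{k_x})$ up to the needed total degree. I would argue this set is \emph{closed} under one application of the transition: replacing each $v_n$ by its normalized assignment, multiplying out, and taking expectations produces a linear combination of mixed moments at step $n-1$, with constant coefficients (the probabilistic choice probabilities $p_i$ and the known moments of the fresh distribution variables — which by normalization have constant parameters — are all constants). The two hypotheses are exactly what keeps this basis \emph{finite}: condition (2) ensures the indicator $[C]$, once expanded, only multiplies in powers of \emph{finite} variables, which by the Finite Power Reduction Theorem (Theorem~\ref{thm:PowerReduction}) collapse to bounded degree; condition (1) ensures that no non-finite variable is raised, via self-dependency, to ever-increasing powers — its assignment is (after handling the finite part) linear in the non-finite variables it depends on, so degrees in the non-finite variables do not grow under iteration. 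I would make this precise by defining, for each target $\E(x_n^k)$, a finite "closure set" $\mathcal{M}$ of mixed moments obtained by a terminating saturation: repeatedly add any mixed moment that appears after one substitution/expansion/power-reduction step; terminate because finite-variable exponents are capped at $m-1$ and non-finite-variable total degree is non-increasing.

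Then the finitely many sequences $(\mu_n)_{\mu \in \mathcal{M}}$ satisfy a linear system $\mathbf{v}_n = \mathbf{A}\,\mathbf{v}_{n-1}$ (plus possibly a constant vector, absorbed by adjoining the constant sequence $1$) with constant matrix $\mathbf{A}$; by the closed-form theory for systems of C-finite recurrences, each component — in particular $\E(x_n^k)$ — has a computable exponential-polynomial closed form, establishing moment-computability. The main obstacle I expect is the \emph{termination / finiteness of the closure set}: I need a clean monotone quantity (e.g. a lexicographic measure combining the multiset of non-finite-variable total degrees with the bounded finite-variable exponents) that provably does not increase under the substitute-expand-reduce step, and here the subtle point is handling the \emph{conditioned default} $d_v$ together with transitive dependencies through the if-conditions — one must check that the finiteness hypothesis on condition variables and the no-polynomial-self-dependency hypothesis interact correctly so that no feedback loop re-inflates a degree. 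Handling the indicator expansion $[C] = \prod(\text{atomic indicators})$ and reducing $\text{(atomic indicator)}\cdot(\text{finite variable})^j$ via Theorem~\ref{thm:PowerReduction} is the technical core; once that is in place the linear-algebra conclusion is routine.
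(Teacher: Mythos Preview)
Your overall architecture matches the paper's: work with expected values of monomials in the program variables, show that one loop iteration expresses each such expectation as a constant-coefficient linear combination of others, argue that only finitely many monomials are ever needed, and read off a C-finite system. The treatment of guards via indicator polynomials and of finite variables via Theorem~\ref{thm:PowerReduction} is also the same.

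The gap is in your termination argument. You write that condition~(1) makes each non-finite variable's assignment ``linear in the non-finite variables it depends on,'' so that ``non-finite-variable total degree is non-increasing.'' This is false. Condition~(1) only forbids polynomial \emph{self}-dependency; it permits, say, $y := x^2$ as long as $x$ does not depend back on $y$. Substituting such an assignment into $\E(y_n^k)$ produces $\E(x_{n-1}^{2k})$, so the total non-finite degree strictly grows. Your proposed measure (total degree, or a multiset of total degrees) therefore does not witness termination, and the saturation as you describe it need not stop.

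What the paper does instead is refine the order using the dependency structure. It groups variables into equivalence classes under mutual dependency; condition~(1) then guarantees that \emph{within} a class all dependencies among non-finite variables are linear, while polynomial dependencies may cross class boundaries but only in one direction. Monomials are compared by the reverse-lexicographic sequence of their degrees \emph{per equivalence class} (ordered by the class order). In the $y := x^2$ example, the degree in $y$'s class drops to zero even though the degree in $x$'s (strictly smaller) class rises, so the monomial decreases in this order. That order is a well-order on the quotient, and transfinite induction along it gives the finite closure set $\mathcal{S}^{(Q)}$. This stratification by dependency classes is the missing ingredient in your proposal; once you have it, the rest of your outline goes through.
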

\begin{proof}
We have to show that for an arbitrary normalized program $\Program$ satisfying the conditions of Theorem~\ref{thm:moment-computability}, all $x \in \Vars(\Program)$ and all $k \in \N$, the $k$th moment of $x$ (that is $\E(x_n^k)$) admits a closed-form as an exponential polynomial.
$\E(x_n^k)$ admits a closed-form as an exponential polynomial in $n$ if it satisfies a linear recurrence.
We show a slightly more general statement.
That is, we show that for \emph{any monomial of program variables} $M$ (and hence also for $x^k$) $\E(M)$ satisfies a linear recurrence.
The main idea of the proof is to show that $\E(M)$ only depends on a finite set of monomials, each (in some sense) \emph{not larger} than~$M$ itself.
Intuitively, the finite set of monomials on which $\E(M)$ depends on are all monomials of program variables such that their expected values determine $\E(M)$.
We will show that this set of monomials exists, is finite, and leads to a system of linear recurrences containing $\E(M)$, implying a computable exponential polynomial closed-form for $\E(M)$ by Theorem~\ref{thm:closed-form}.

Let $\Program$ be a normalized program satisfying the conditions of Theorem~\ref{thm:moment-computability}, $x \in \Vars(\Program)$, $k,n\in\N$ arbitrary, and $\mathcal{M}$ the set of all monomials over $\Vars(\Program)$ with the powers of every finite variable $d$ bounded by the number of possible values of $d$ (higher powers can be reduced as of Theorem~\ref{thm:PowerReduction}).

\paragraph{Recurrences over Moments.}
Given the syntax of probabilistic programs and properties of expectation, for any monomial $M \in\mathcal{M}$ there is a natural way to express the expected value of $M$ in iteration $n{+}1$, that is $\E(M_{n+1})$, as a linear combination of expectations of monomials in iteration $n$:
\begin{equation}\label{eq:monomial-recurrence}
    \E(M_{n+1}) = \sum_{N\in M^*} c_N \E(N_n),
\end{equation}
for some finite set $M^*\subset\mathcal{M}$, and non-zero constants~$c_N$.
Equation \eqref{eq:monomial-recurrence} is called \emph{the recurrence} of $\E(M)$.
The set $M^*$ is the set of monomials that appear in the recurrence of $\E(M)$.
We define the $^*$ operator to give such a set for any monomial and extend the definition to sets by 
\begin{equation*}
    S^* = \bigcup_{M\in S} M^*.
\end{equation*}

The exact recurrence can be computed from $\E(M_{n+1})$ by replacing variables appearing in $M$ by their assignments and using the linearity of $\E$ to convert expected values of polynomials to linear combinations of expected monomials.
Recall that for a program in normal form, every program variable is only assigned once, all distribution parameters are constant, and the loop body is a flat list of guarded assignments (Section~\ref{ssec:PP-transforms}).
The guarded assignments are of the form
$$x = a_0 \{p_0\} \dots \{p_{i-1}\} a_i [C_x] d_x,$$
for polynomials of program variables $a_0, \dots, a_i$ and constant probabilities $p_0,\dots,p_{i-1}$, or
$$x = Dist [C_x] d_x$$ for some admissible distribution $Dist$.
The guard $[C_x]$ is a boolean condition and for normalized programs, $d_x$ is always a program variable.
Assume, that the variable $x$ appears in the monomial $M$.
Hence, $M = M'\cdot x^k_{n+1}$ for some monomial $M'$ not containing $x$.
If the single assignment of $x$ is a (guarded) probabilistic choice of polynomials we rewrite $\E(M_{n+1})$ to
\begin{equation}\label{eq:rewrite-cat}
    \E(M_{n+1}) = \E(M'\cdot x^k_{n+1}) = \E\left(M'\left( d_x [\lnot C_x] + \sum p_i a_i^k [C_x]\right)\right).
\end{equation}

If the single assignment of $x$ is a (guarded) draw from a distribution we rewrite $\E(M_{n+1})$ to
\begin{equation}\label{eq:rewrite-dist}
    \E(M_{n+1}) = \E(M'\cdot x^k_{n+1}) = \E\left(M' d_x [\lnot C_x]\right) + \E\left(M'[C_x]\right)\E\left(Dist^k\right).
\end{equation}

Variables in conditions $[C_x]$ are all finite since they come from branch conditions. 
We further simplify the expressions of Equations~\eqref{eq:rewrite-cat}-\eqref{eq:rewrite-dist} by replacing the logical conditions $[C_x]$ by polynomials that evaluate to $1$ whenever variables satisfy the condition $[C_x]$ and to $0$ otherwise.
It is possible to write any logical condition over finitely valued variables as such a polynomial (\cite{Stankovic2022}),
with $[x=c] := \prod_{d\in\omega(x)\setminus\{c\}} \frac{x-d}{c-d}$, $[\lnot C] = 1 - [C]$, and $[C_1 \land C_2] = [C_1]\cdot[C_2]$, where $\omega(x)$ is the set of possible values of $x$\footnote{Because negation and conjunction are functionally complete for propositional logic, we can also handle disjunctions using De Morgan’s laws: $[P \lor Q] = [\lnot (\lnot P \land \lnot Q)] = 1 - (1 - [P])\cdot(1 - [Q])$. Inequalities can then be transformed into a disjunction of equalities since we assume that all variables appearing in if-conditions are finitely valued.}.
Converting conditions to polynomials in the equation above leads to polynomials over moments of program variables.
We can compute the recurrence of $\E(M)$ in Equation~\eqref{eq:monomial-recurrence} by replacing all variables in $M$ of iteration $n{+}1$ from last to first (by their appearance in $\mathcal{P}$'s loop body).
Throughout, the linearity of $\E$ is used to convert expected values of polynomials to linear combinations of expected monomials.
In Example~\ref{ex:recurrences}, we illustrate this computation on the program from Figure~\ref{fig:running-example} with its normal form from Figure~\ref{fig:running-normalized}.

\paragraph{Ordering.}
Now that we can compute the recurrences, we need to introduce the order for monomials, such that the monomials appearing in the recurrence for $M$ are not larger than $M$. We will need this to show that computing the recurrences as described above is, indeed, a finite process. Intuitively, a variable $y$ is larger than (or equal to) $x$ if it depends on $x$. Mutually dependent variables will form an equivalence class. We then extend the order to monomials based on their degrees with respect to the variables' equivalence classes. 

More formally, let $\preceq$ be a smallest total preorder on variables such that $x\preceq y$ whenever $y$ depends on~$x$. We write $x\prec y$ iff $x \preceq y$ and $y \not\preceq x$, and $x\sim y$ iff $x \preceq y$ and $y \preceq x$. Let $\widetilde{x}$ be the equivalence class of $x$ induced by $\sim$.
Note, that all variables in an equivalence class are mutually dependent.
However, because of our assumptions on the program $\Program$, the mutual dependencies among non-finite variables are all linear (as there are no polynomial self-dependencies).

We extend $\preceq$ to a preorder on the set of monomials $\mathcal{M}$. For every monomial $M$ and non-finite variable~$x$, we consider the degree $deg(\widetilde{x}, M)$ of $M$ in the equivalence class of $\widetilde{x}$\footnote{$deg(\widetilde{x}, M) := \sum_{x\in\widetilde{x}} deg(x, M)$, where $deg(x, M)$ is the degree of $x$ in $M$.}. We associate $M$ with the sequence of $deg(\widetilde{x}, M)$ for equivalence classes of all non-finite variables, ordered reverse-lexicographically with respect to $\preceq$.
Then the relations $\prec$, $\sim$, and equivalence classes $\widetilde{(-)}$ follow naturally from $\preceq$. 

By Theorem~\ref{thm:PowerReduction} and the definition of~$\preceq$, the equivalence class $\widetilde{M}$ is finite for each $M\in\mathcal{M}$. Let $\mathcal{M}_\sim$ be the set of equivalence classes of $\sim$. The preorder $\preceq$ induces a partial order $\preceq_\sim$ on $\mathcal{M}_{\sim}$.
Note that monomials only contain non-negative powers and a finite number of variables.
These facts together with $\preceq$ being total imply that $\preceq_\sim$ is a well-order. We will write $\preceq$ instead of $\preceq_\sim$ when the meaning is clear from the context.

With these orders defined, we have $N \preceq M$ for any $N\in M^*$ and $M\in\mathcal{M}$.
Intuitively, this means that for every monomial $M$, the monomials occurring in the recurrence of $M$ are not larger than $M$ itself.
This is true because the order on variables is defined according to variable dependencies, the order on monomials is a reverse-lexicographic extension of the order on variables, and the fact that polynomial self-dependencies are not allowed by assumption.

\paragraph{Finite $\mathcal{S}^{(Q)}$}
We show, that for any monomial $Q$, there is a finite set of monomials $\mathcal{S}^{(Q)}\subset\mathcal{M}$ containing $Q$ such that $M^*\subset \mathcal{S}^{(Q)}$ for any $M\in\mathcal{S}^{(Q)}$.
Because $M^*$ is the set of all monomials in the recurrence of $M$, this means that $\mathcal{S}^{(Q)}$ contains all monomials necessary to construct a system of linear recurrences containing $E(Q)$ (if it exists and is finite).

Let 
\begin{equation}\label{eq:SQ}
  \mathcal{S}^{(Q)} := \widetilde{Q} \cup \bigcup_{\substack{A\in\widetilde{Q^{}}^*\\A\prec Q}} \mathcal{S}^{(A)}.
\end{equation}
Clearly $Q\in\mathcal{S}^{(Q)}$ and $M^*\subset \mathcal{S}^{(Q)}$ for any $M\in\mathcal{S}^{(Q)}$ by construction. We are left to show that $\mathcal{S}^{(Q)}$ is finite for all $Q \in \widetilde{Q} \in \mathcal{M}_\sim$, which we can do by transfinite induction over $\mathcal{M}_\sim$. 

For the base case, we have to show that $\mathcal{S}^{(Q)}$ is finite for all $Q$ in $\widetilde{1}$ (the trivial monomial). Since $\widetilde{1} = \{\prod_{d\in F} d^{\lambda_d} \mid \lambda_d\le m_d\}$, where $F$ is the set of finite program variables and $m_d$ are upper bounds on their powers as of Theorem~\ref{thm:PowerReduction}, 
$\mathcal{S}^{(Q)} = \widetilde{1}$ is finite for all $Q\in \widetilde{1}$.
Suppose $\mathcal{S}^{(A)}$ is finite for all $A\prec Q$. Since $\widetilde{Q^{}}^*$ is finite, so is the union in~\eqref{eq:SQ} and, as a result,~$\mathcal{S}^{(Q)}$.

\paragraph{Moments.} A system of linear recurrences with constant coefficients can be constructed for monomials in $\mathcal{S}^{(Q)}$.
Therefore, the closed-form of any $E(M) \in \mathcal{M}$ exists and is computable.
\end{proof}

We now turn back to Theorem~\ref{thm:moment-computability} and establish its validity. The crux of our proof below relies on the fact that  our transformations 
computing normal forms of $\Program$ (see Section~\ref{ssec:PP-transforms}) are $\Program$-preserving. 
\begin{proof}[Proof (of Theorem~\ref{thm:moment-computability})]
    By Theorem~\ref{thm:NormalForm} (\nameref*{thm:NormalForm} Termination), $\Program$ can be transformed to a~normalized loop $\Program_{\mathcal{N}}$.
    By Lemma~\ref{lemma:simplicitypreservation} (\nameref*{lemma:simplicitypreservation}), $\Program_{\mathcal{N}}$ satisfies all conditions from Lemma~\ref{lemma:moment-computability} (\nameref*{lemma:moment-computability}).
    Thus, $\Program_{\mathcal{N}}$ is moment-computable and the moments are equivalent to those of $\Program$ for $x \in \Vars(\Program)$ by Theorem~\ref{thm:NormalForm} (\nameref*{thm:NormalForm} Correctness).
\end{proof}

The proofs of Theorem~\ref{thm:moment-computability} and Lemma~\ref{lemma:moment-computability} are constructive and describe a procedure to compute (higher) moments of program variables by (1) transforming a probabilistic loop into a normal form according to Theorem~\ref{thm:NormalForm}, (2) constructing a system of linear recurrences as in the proof of Lemma~\ref{lemma:moment-computability} and (3) solving the system of linear recurrences with constant coefficients.
In the following example, we illustrate the whole procedure on the running example from Figure~\ref{fig:running-example}.

\begin{example}\label{ex:recurrences}
We return to the probabilistic loop $\Program$ from Figure~\ref{fig:running-example}.
A normal form $\Program_{\mathcal{N}}$ for $\Program$ was given in Figure~\ref{fig:running-normalized}.
To compute a closed-form of the expected value of the program variable $z$, we will model $\E(z_n)$ as a system of linear recurrences according to the proof of Lemma~\ref{lemma:moment-computability}.
For a cleaner presentation we will refer to the variable \emph{toggle} by $t$.
Note that $t$ is binary.
To construct the recurrence for $\E(z_n)$, we start with $\E(z_{n+1})$ at the assignment of $z$ in $\Program_{\mathcal{N}}$ and repeatedly replace variables by the right-hand side of their assignment starting from $z$'s assignment and stopping at the top of the loop body.
Throughout the process we use the linearity of expectation to convert expected values of polynomials to linear combinations of expected monomials (as required by Equation~\ref{eq:monomial-recurrence}):
{
\footnotesize
\begingroup
\addtolength{\jot}{0.5em}
\begin{flalign*}
&\E(z_{n+1}) \\
&\quad \downarrow \text{assignment of $z$} \\
&\E(([t_{n+1}=0](z_n + y_{n+1})\ \{\nicefrac{1}{4}\}\ [t_{n+1}=0](z_n - y_{n+1})) + [t_{n+1} \neq 0]z_n) \\
&\quad \downarrow \text{replace (in)equalities by polynomials} \\
&\E(((1-t_{n+1})(z_n + y_{n+1})\ \{\nicefrac{1}{4}\}\ (1-t_{n+1})(z_n - y_{n+1})) + t_{n+1}z_n) \\
&\quad \downarrow \text{$\E$ on categorical} \\
&\frac{1}{4}\E((1-t_{n+1})(z_n + y_{n+1})) + \frac{3}{4}\E((1-t_{n+1})(z_n - y_{n+1})) + \E(t_{n+1}z_n) \\
&\quad \downarrow \text{simplify; $\E$ linearity} \\
&\E(z_n) + \frac{1}{2}\E(t_{n+1}y_{n+1}) - \frac{1}{2}\E(y_{n+1})  \\
&\quad \downarrow_{*} \text{similarly, replace $y_{n+1},x_{n+1},t_{n+1}$} \\
&\E(z_n) - \frac{1}{6} \E(t_n x_n) - \frac{1}{2} \E(t_n y_n) - \frac{1}{6} \E(t_n x_n^2) + \frac{1}{12} \E(t_n) + \frac{1}{6} \E(t_n z_n).
\end{flalign*}
\endgroup
}
The last line of the calculation represents the recurrence equation of the expected value of $z$.
For every monomial in the recurrence equation of $\E(z_n)$ (that is: $z^* = $ $\{tx, ty, tx^2, t, tz\}$), we compute its respective recurrence equation and recursively repeat this procedure which eventually terminates according to Lemma~\ref{lemma:moment-computability}.

The ordering of variables and monomials, which is essential for proving termination in Lemma~\ref{lemma:moment-computability} comes from an ordering of variables  $t, x, y, z, l, g, sum$. Based on the program assignments, we need $t \preceq x \prec y \preceq l; y \preceq z; t \preceq z \preceq y; x \preceq sum; g \preceq sum$.
Variables y and z form an equivalence class, since $y \preceq z$ and $z \preceq y$. Notice that some variables are not ordered, for example $t$ and $g$. This gives some freedom in choosing the total preorder, any will work. Let us have $t \prec g \prec x \prec sum \prec y \sim z \prec l$. This gives an order on the equivalence classes $\{t\} \prec \{g\} \prec \{x\} \prec \{sum\} \prec \{y, z\} \prec \{l\}$.
The ordering on monomials then considers the classes of non-finite program variables, i.e. all equivalence classes except $\{t\}$. A monomial is then assigned a sequence of degrees with respect to each equivalence class, e.g. $z \to (0,0,0,1,0)$, $tz\to (0,0,0,1,0)$, and $x^2 \to (0,0,2,0,0)$, with monomials ordered reverse-lexicographically with respect to their sequence. For the monomials in this example: $ 1 \sim t \prec x \sim tx \prec x^2 \sim tx^2 \prec y \sim z \sim ty \sim tz$.

After computing all necessary recurrence equations, we are faced with a system of linear recurrences of the expected values of the monomials $z$, $tx$, $ty$, $x$, $y$, $tx^2$, $t$, $tz$, $x^2$ and $1$ with recurrence matrix
\begin{equation*}\label{eq:example-system}
\footnotesize
\left[\begin{matrix}1 & - \frac{1}{6} & - \frac{1}{2} & 0 & 0 & - \frac{1}{6} & \frac{1}{12} & \frac{1}{6} & 0 & 0\\0 & -1 & 0 & 1 & 0 & 0 & 0 & 0 & 0 & 0\\0 & 0 & -1 & 0 & 1 & 0 & 0 & 0 & 0 & 0\\0 & 0 & 0 & 1 & 0 & 0 & \frac{3}{2} & 0 & 0 & 0\\0 & \frac{1}{3} & 0 & 0 & 1 & \frac{1}{3} & - \frac{1}{6} & - \frac{1}{3} & 0 & 0\\0 & 0 & 0 & 0 & 0 & -1 & 0 & 0 & 1 & 0\\0 & 0 & 0 & 0 & 0 & 0 & -1 & 0 & 0 & 1\\1 & 0 & 0 & 0 & 0 & 0 & 0 & -1 & 0 & 0\\0 & 3 & 0 & 0 & 0 & 0 & \frac{5}{2} & 0 & 1 & 0\\0 & 0 & 0 & 0 & 0 & 0 & 0 & 0 & 0 & 1\end{matrix}\right].
\end{equation*}

Using any computer algebra system, such as \texttt{sympy}\footnote{\url{https://www.sympy.org}}, we arrive at the closed-form solution for the expected value of $z$, $\E(z_n)$, parameterized by the loop counter $n$:
\begin{flalign*}
\footnotesize
\begin{split}
\E(z_n) &= \frac{883}{32} + \frac{29n}{16} - \frac{201}{20} 2^{-n} 6^{\frac{n}{2}} - \frac{67}{20} 2^{-n} 6^{\frac{1+n}{2}} - \frac{37}{10} 3^{-n} 6^{\frac{n}{2}} \\ 
&- \frac{37}{20} 3^{-n} 6^{\frac{1+n}{2}} - \frac{201}{20} 6^{\frac{n}{2}} \left(\frac{-1}{2}\right)^n
- \frac{37}{10} 6^{\frac{n}{2}} \left(\frac{-1}{3}\right)^n + \frac{67}{20} 6^{\frac{1+n}{2}} \left(\frac{-1}{2}\right)^n \\
&+ \frac{37}{20} 6^{\frac{1+n}{2}} \left(\frac{-1}{3}\right)^n + \frac{9}{16} n \left(-1\right)^n + \frac{29}{32} \left(-1\right)^n + \frac{9}{16} n^2.
\end{split}
\end{flalign*}

This example highlights the strength of algebraic techniques for probabilistic program analysis in comparison to constraint-based methods employing templates.
We are not aware of any template-based method able to handle functions of the complexity of $\E(z_n)$.
Our tool \Tool{} is able to find the closed-form of $\E(z_n)$ in under one second (see Section~\ref{sec:evaluation}).
\end{example}

\subsection{Guarded Loops}\label{subsec:guarded-loops}

When we model a guarded probabilistic loop \lstinline{while $\phi$: $\dots$} as an infinite loop \lstinline{while $\star$: if $\phi$: $\dots$}, we impose the same restrictions on $\phi$ as on if-conditions (that means $\phi$ only contains finite variables) to guarantee computability and correctness.
The $k$th moment of $x$ after termination is then given by
\begin{equation}\label{eq:after-term}
    \lim_{n\to\infty} \E(x_n^k \mid \lnot\phi_n) = \lim_{n\to\infty} \frac{\E(x_n^k \cdot [\lnot\phi_n])}{\E([\lnot\phi_n])}.
\end{equation}
If the limit exists, we can use standard methods from computer algebra to compute it, as the (higher) moments our approach computes are given as exponential polynomials~\cite{Gruntz1996}.

\pagebreak
\begin{example}
Consider the following loop, in which $x$ after termination is geometrically distributed with parameter $\nicefrac{1}{2}$:

\noindent
\begin{minipage}{\linewidth}
\centering
\begin{minipage}{0.4\linewidth}
    \begin{lstlisting}
    x, stop = 0, 0
    while stop == 0:
        stop=Bernoulli(1/2)
        x=x+1
    end
    \end{lstlisting}
\end{minipage}
\end{minipage}
With Equation~\ref{eq:after-term} and the techniques from this section we get:
\begin{flalign*}
    \footnotesize
    \E(x) = \lim_{n\to\infty} \E(x_n \mid stop_n = 1) 
    = \lim_{n\to\infty} \frac{\E(x_n \cdot stop_n)}{\E(stop_n)} 
    = \lim_{n\to\infty} \frac{-n 2^{-n} - 2^{1 - n} + 2}{1 - 2^{-n}} = 2.
\end{flalign*}
\end{example}

Moreover, whenever the variables in the loop guard are not probabilistic, traditional techniques can be applied to determine the number of loop iterations n which can then be plugged into the (higher) moments computed by our work.
Apart from the guarded loops, many systems show the type of infinite behavior naturally modeled with infinite loops, such as probabilistic protocols or dynamical systems.

\subsection{Infinite If-Conditions}\label{ssec:infinite-if}

Theorem~\ref{thm:moment-computability} on moment-computability requires the variables in all if-conditions to be finite.
Nevertheless, in some cases, if-conditions containing infinite variables can be handled by our approach.
Let $\Program$ be a probabilistic loop containing an if-statement with condition \lstinline{$F$ and $I$} where $F$ contains only finite variables and $I$ contains infinite variables.
Without loss of generality, no variable in $I$ is assigned in or after the if-statement.
Let the transformation removing $I$ be defined by
\noindent
\begin{minipage}{\linewidth}
    \centering
    \begin{minipage}{0.25\linewidth}
    \begin{lstlisting}
    if $F$ and $I$:$Branch$
    \end{lstlisting}
    \end{minipage}
    \begin{minipage}{0.1\linewidth}
    $\transmap$
    \end{minipage}
    \begin{minipage}{0.25\linewidth}
    \begin{lstlisting}
    t = Bernoulli($p$)
    if $F$ and t == 1:
    $Branch$ end
    \end{lstlisting}
    \end{minipage}
\end{minipage}
where \emph{t} is a fresh variable and $p := \P(I)$ (potentially symbolic).
Then, the transformation preserves the distributions of all $x \in \Vars(\Program)$ under the following assumptions:
\begin{enumerate}
    \item\label{enum:inf-cond:iter} $I$ is iteration independent, meaning for every variable $x$ in $I$ neither $x$ nor any variable $x$ depends on (as of Definition~\ref{def:dependency}) has a self-dependency.
    \item\label{enum:inf-cond:cond} $I$ is statistically independent from $F$ and all conditions in $Branch$.
    \item\label{enum:inf-cond:ass} For every assignment $A$ in $Branch$ and every variable $x$ in $A$ which has been assigned before A, it holds that $I$ and $x$ are statistically independent.
\end{enumerate}
Assumption~\ref{enum:inf-cond:iter} ensures that $\P(I) = \E([I])$ is constant.
Assumption~\ref{enum:inf-cond:cond} and \ref{enum:inf-cond:ass} further ensure that $\E([I])$ can always be \lq\lq{}pulled out\rq\rq{} (that means $\E([I]x) = \E([I])\E(x)$) in the construction of the recurrences. Assumptions~\ref{enum:inf-cond:iter}-\ref{enum:inf-cond:ass} can often be checked automatically.

\begin{example}
Consider the statement \lstinline{if g < 1/2: sum=sum+x} of the program from Figure~\ref{fig:running-example}, where the value of $g$ is drawn from a standard normal distribution.
In this case, the transformation's parameter $p$ represents $\P(\Normal(0,1) < \nicefrac{1}{2})$, but is left symbolic for the moment computation. 
The integral $\P(\Normal(0,1) < \nicefrac{1}{2})$ can be solved separately and the result be substituted for $p$.
\end{example}

\subsection{On the Necessity of the Conditions Ensuring Moment-Computability}\label{ssec:necessity}

Theorem~\ref{thm:moment-computability} states two conditions that are sufficient to ensure that the closed-forms of the program variables' higher moments always exist and are computable.
\emph{Condition 1 enforces that there is no variable with potentially infinite values with a polynomial self-dependency.}
\emph{Condition 2 demands that all variables appearing in if-conditions are finite.}
Our approach for computing the moments of variables of probabilistic loops can handle precisely the programs that satisfy these two conditions.
We argue that both conditions are necessary, in the sense that if either of the conditions does not hold, the existence or computability of the variable moments' closed-forms as exponential polynomials cannot be guaranteed for all programs from our program model when one or both conditions are removed from Theorem~\ref{thm:moment-computability}.

\paragraph{Condition 1.}
Relaxing condition 1 of Theorem~\ref{thm:moment-computability} means that we allow for polynomial self-dependencies of non-finite variables.
The \emph{logistic map}~\cite{May1976} is a quadratic first-order recurrence defined by $x_{n+1} = r \cdot x_n (1 - x_n)$ and well-known for its chaotic behavior.
A famous fact about the logistic map is that it does not have an analytical solution for most values of $r$~\cite{Maritz2020}.
By neglecting condition 1, we can easily devise a loop modeling the logistic map:

\noindent
\begin{minipage}{\linewidth}
\centering
\begin{minipage}{0.4\linewidth}
    \begin{lstlisting}
    while $\star$:
        x = r$\cdot$x(1-x)
    end
    \end{lstlisting}
\end{minipage}
\end{minipage}

The value of the program variable $x$ after iteration $n$ is equal to the $n$th term of the logistic map.
This means, for most values of $r$ and initial values of $x$, there does not exist an analytical closed-form solution for the program variable $x$.
Moreover, our counter-example illustrates that condition 1 is necessary already for programs with a single variable and without stochasticity and if-statements.

\paragraph{Condition 2.}
Loosening condition 2 of Theorem~\ref{thm:moment-computability} and allowing for non-finite variables in if-conditions renders our programming model Turing-complete.
Intuitively, one can model a Turing machine's tape with two variables $\texttt{l}$ and $\texttt{r}$ such that the binary representation of $\texttt{l}$ represents the tape's content left of the read-write-head.
The binary representation of $\texttt{r}$ represents the tape's content at the position of the read-write-head and towards the right.
The least significant bit of $\texttt{r}$ is the current symbol the Turing machine is reading.
We can extract the least significant bit of $\texttt{r}$ in our programming model (and neglecting condition 2) by introducing a variable $\texttt{lsb}$ and using a single if-statement involving non-finite variables: whenever the loop changes the value of $\texttt{r}$, we set $\texttt{lsb} := \texttt{r}$. The while-loop's body is of the form \lq\lq\lstinline{if lsb > 1: lsb=lsb-2 else $\mathit{transitions}$ end}\rq\rq.
The Turing machine's transition table can be encoded using if-statements.
Writing and shifting can be accommodated for by multiplying by $2$ or $\nicefrac{1}{2}$ and using addition and subtraction.
The Turing machine's state can be modelled by a single finite variable.
Therefore, by dropping condition 2, being able to model the program variables by linear recurrences would give rise to a decision procedure for the Halting problem:
assume we introduce a variable $\texttt{terminated}$ which is initialized to $0$ before the loop and set to $1$ whenever the Turing-machine terminates.
If $\texttt{terminated}$ can be modelled by a linear recurrence of order $k$, it suffices to check the first $k$ values of the recurrence to determine whether or not $\texttt{terminated}$ is always $0$ \cite{Kauers2011} and the Turing-machine does not terminate.
As the Halting problem is well-known to be undecidable, condition 2 is necessary to guarantee that the program variables can be modelled by linear recurrences, even without stochasticity and polynomial arithmetic.

\begin{remark}[Sequential \& Nested Loops]
Our program model consists of single non-nested loops.
Sequential loops can be analyzed one by one with the same techniques as presented in this section.
For nested loops, one could design a program transformation transforming a nested loop into a non-nested loop and then apply the techniques presented in this section.
Alternatively, we conjecture that the approach presented for guarded loops (Section~\ref{subsec:guarded-loops}) could be used to first compute the moments of the most inner loops and then use the obtained information to compute the moments of the outer loops.
The main challenge lies in ensuring the moment-computability conditions for the outer loops (Theorem~\ref{thm:moment-computability}) once the inner loops have been analyzed.
\end{remark}

\section{Use-Cases of Higher Moments}\label{sec:moments}

For probabilistic loops, computing closed-forms of the variables' (higher) moments poses a technique for synthesizing quantitative invariants:
Given a program variable $x$ and a closed-form $f(n)$ of its $k$th moment, the equation $\E(x_n^k) - f(n) = 0$ is an invariant.
Moreover, closed-forms of raw moments can be converted into closed-forms of \emph{central} moments, such as variance, skewness or kurtosis (cf. Section~\ref{ssec:prob-theory}).
In addition, this section provides hints on two further use-cases of higher moments of probabilistic loops: (i) deriving tail probabilities (Section~\ref{ssec:tailProbs}) and (ii) inferring distributions of random variables from their moments (Section~\ref{ssec:mom-dist}).

\subsection{From Moments to Tail Probabilities}\label{ssec:tailProbs}

Tail probabilities measure the probability that a random variable surpasses some value.
The mathematical literature contains several inequalities providing upper- or lower bounds on tail probabilities given (higher) moments~\cite{Boucheron2013}.
Two examples are \emph{Markov's inequality} for upper and the \emph{Paley-Zygmund inequality} for lower bounds.

\begin{theorem}[Markov's Inequality]\label{thm:markov-inequality}
Let $X$ be a non-negative random variable, and~$t \geq 0$, then
$$\P(X \geq t) \leq \frac{\E(X^k)}{t^k}.$$
\end{theorem}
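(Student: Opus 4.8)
The plan is to prove Markov's inequality directly from the definition of expectation via an indicator-function bound, which is the standard textbook argument. First I would fix the non-negative random variable $X$, the real number $t \geq 0$, and an exponent $k \in \N$ (implicitly $k \geq 1$, else the statement is trivial). I would introduce the indicator random variable $[X \geq t]$ and observe the pointwise inequality $t^k \cdot [X \geq t] \leq X^k$: on the event $\{X \geq t\}$ we have $X^k \geq t^k$ since $x \mapsto x^k$ is monotone on the non-negative reals, and on the complement $\{X < t\}$ the left-hand side is $0$ while the right-hand side is non-negative because $X \geq 0$. Hence the inequality holds on all of the underlying probability space.

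Next I would apply the expectation operator to both sides, using monotonicity of $\E$: this gives $t^k \cdot \E([X \geq t]) \leq \E(X^k)$, where I use linearity to pull the constant $t^k$ out of the expectation. Recognizing that $\E([X \geq t]) = \P(X \geq t)$ by the defining property of indicator expectations, I obtain $t^k \cdot \P(X \geq t) \leq \E(X^k)$. The final step is to divide both sides by $t^k$, which yields the claimed bound $\P(X \geq t) \leq \E(X^k)/t^k$.

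There is genuinely no hard part here — the argument is a two-line classical computation — but the one point that warrants a word of care is the division by $t^k$: if $t = 0$ then $t^k = 0$ and the division is illegitimate, yet the inequality $\P(X \geq 0) \leq \infty$ (interpreting the right-hand side as $+\infty$) holds vacuously, so the statement is understood for $t > 0$, or with the usual convention that division by zero gives $+\infty$. I would also note in passing that $k$ should be a positive integer (or at least that $X^k$ is well-defined and integrable); if $\E(X^k) = +\infty$ the bound is again trivially true. With these conventions in place, the three displayed steps above constitute a complete proof.
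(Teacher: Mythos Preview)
Your proof is the standard and correct argument for Markov's inequality. The paper, however, does not actually prove Theorem~\ref{thm:markov-inequality}: it is stated without proof as a well-known result from probability theory (cf.\ the reference to~\cite{Boucheron2013}), used only as a tool in Section~\ref{ssec:tailProbs}. So there is nothing to compare against; your argument is exactly the classical indicator-function proof one would find in any textbook, and the caveats you raise about $t=0$ and $\E(X^k)=+\infty$ are appropriate.
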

\begin{theorem}[Paley-Zygmund Inequality]\label{thm:paley-zygmund}
Let $X$ be a random variable with $X \geq t$ almost-surely. Then
$$\P(X > t) \geq \frac{(\E(X) - t)^2}{\E(X^2) - 2t\E(X) + t^2}.$$
\end{theorem}

\begin{example}\label{ex:tail-probabilities}
For \emph{Herman's Self-Stabilization} program from Figure~\ref{fig:herman3} almost-surely $tokens \in \{ 0, 1, 2, 3 \}$.
With the techniques from previous sections, we can compute the first two moments $\E(tokens_n) = 1 + 2 \cdot 4^{-n}$ and $\E(tokens^2_n) = 1 + 8 \cdot 4^{-n}$.
Markov's inequality (Theorem~\ref{thm:markov-inequality}) gives us the upper bound $\P(tokens_n \geq 2) \leq \nicefrac{1}{2} + 4^{-n}$ using the first moment and $\P(tokens_n \geq 2) \leq \nicefrac{1}{4} + 2 \cdot 4^{-n}$ utilizing the second moment.

For the Paley-Zygmund inequality (Theorem~\ref{thm:paley-zygmund}) both the first and the second moment are required, yielding the lower bound $\P(tokens_n \geq 2) = \P(tokens_n > 1) \geq 4^{-n}$.
The theorem's precondition that almost-surely $tokens \geq 1$ might not be apparent at first sight.
We take this for granted for now and will clarify this fact in Example~\ref{ex:herman:distribution}.
\end{example}

Markov's inequality and the Paley-Zygmund inequality are just two examples showing that our technique for moment computation can be leveraged for further program analysis using known results from probability theory.
Our approach computes the \emph{exact moments} of variables in probabilistic loops instead of just approximations or bounds on moments.
This enables our technique to be readily combined with results from probability theory that require exact moments.

\subsection{From Moments to Distributions}\label{ssec:mom-dist}

For finite random variables, their full distribution can be recovered from finitely many moments.
More precisely, given a random variable $X$ with $m$ possible values, the distribution of $X$ can be recovered from its first $m{-}1$ moments, as the following theorem states: 

\begin{theorem}\label{thm:DistributionFromMoments}
Let $X$ be a random variable over $\{a_1, \dots, a_m\}$ and $p_i := P(X = a_i)$.
The values $p_i$ are the solutions of the system of linear equations given by $\sum_{i=1}^m p_i a_i^j = \E(X^j)$ for $0 \leq j < m$.
\end{theorem}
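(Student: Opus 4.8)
The plan is to show that the linear system $\sum_{i=1}^m p_i a_i^j = \E(X^j)$ for $0 \leq j < m$ has the vector $(p_1, \dots, p_m)$ of point masses as its unique solution. First I would observe that this system is exactly $M^T \overline p = \overline{e}$, where $M$ is the $m \times m$ matrix with $M_{ij} = a_j^{i-1}$ from Theorem~\ref{thm:PowerReduction}, $\overline p = (p_1, \dots, p_m)^T$, and $\overline e = (\E(X^0), \dots, \E(X^{m-1}))^T$. That the true point-mass vector satisfies this system is immediate from the definition of expectation: $\E(X^j) = \sum_{i=1}^m P(X = a_i)\, a_i^j = \sum_i p_i a_i^j$, since $X$ takes only the values $a_1, \dots, a_m$.

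The remaining content is \emph{uniqueness}, i.e. that $M$ (equivalently $M^T$) is invertible. Since the $a_i$ are distinct, $M$ is a Vandermonde matrix and hence nonsingular; indeed Theorem~\ref{thm:reduction-formula} already exhibits $M^{-1}$ explicitly, so invertibility is free. Therefore the linear system determines $\overline p$ uniquely, and since the genuine probabilities $p_i = P(X = a_i)$ form one solution, they are \emph{the} solution. This gives both existence (the true distribution solves the system) and the recovery claim (solving the system recovers the true distribution).

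I do not anticipate a genuine obstacle here: the statement is essentially a restatement of Vandermonde invertibility together with the elementary identity for moments of a finitely supported random variable. The only point requiring a word of care is that one must know \emph{a priori} that the support of $X$ is contained in $\{a_1, \dots, a_m\}$ (so that $\E(X^j)$ really equals $\sum_i p_i a_i^j$ with no missing terms); this is exactly the hypothesis of the theorem. One could optionally remark that some $p_i$ may be zero (if $X$ does not actually attain $a_i$), which is harmless — the system still pins down all $m$ values.
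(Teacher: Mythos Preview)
Your proposal is correct and follows essentially the same approach as the paper: observe that the definition of raw moments, $\E(X^j) = \sum_i p_i a_i^j$, yields the $m$ linear equations, so the true probabilities satisfy the system. Your argument is in fact more complete than the paper's, which simply asserts that the solutions of the $m\times m$ system are the $p_i$ without explicitly invoking Vandermonde invertibility (or Theorem~\ref{thm:reduction-formula}) to justify uniqueness; your inclusion of that step is a genuine improvement.
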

\begin{proof}
There are $m$ unknowns $p_i$ for $1 \leq p_i \leq m$.
Note that all $a_i$ are constant and that the first $m{-}1$ raw moments of $X$ are fixed.
Using the definition of raw moments, we get $m$ linear equations $\sum_{i=1}^m p_i a_i^j = \E(X^j)$ for $0 \leq j < m$.
The solutions of the system of $m$ linear equations are the values $p_i$ that determine the distribution of $X$.
\end{proof}

\begin{example}\label{ex:herman:distribution}
Consider \emph{Herman's Self-Stabilization} program from Figure~\ref{fig:herman3}.
In Example~\ref{ex:tail-probabilities} we obtained upper and lower bounds for tail probabilities of the $tokens$ variable using the first one or two moments.
With the first \emph{three} moments we can fully recover the distribution of the $tokens$ variable.
We have that $tokens \in \{0,1,2,3\}$.
Let $p_i := \P(tokens_n = i)$ for $0 \leq i \leq 3$.
By Theorem~\ref{thm:DistributionFromMoments}, we get the following system:
\begin{flalign*}
\begin{split}
p_0 + p_1 + p_2 + p_3 &= \E(tokens_n^0) = 1, \\
p_1 + 2p_2 + 3p_3 &= \E(tokens_n) = 1 + 2 \cdot 4^{-n}, \\
p_1 + 4p_2 + 9p_3 &= \E(tokens_n^2) = 1 + 8 \cdot 4^{-n}, \\
p_1 + 8p_2 + 27p_3 &= \E(tokens_n^3) = 1 + 26 \cdot 4^{-n}. 
\end{split}
\end{flalign*}
The solution can be obtained using standard techniques and tools, yielding $p_0 = 0; p_1 = 1 - 4^{-n}; p_2 = 0; p_3 = 4^{-n}$.

Note that probabilities are given as functions of the loop iteration $n$.
Moreover, the solution shows that almost-surely $tokens \geq 1$, which we assumed to be true in Example~\ref{ex:tail-probabilities}.
\end{example}

\begin{figure}
    \centering
    \begin{subfigure}{0.48\linewidth}
        \centering
        \includegraphics[trim={0.5cm, 0, 0.5cm, 0.5cm},clip,width=0.9\linewidth]{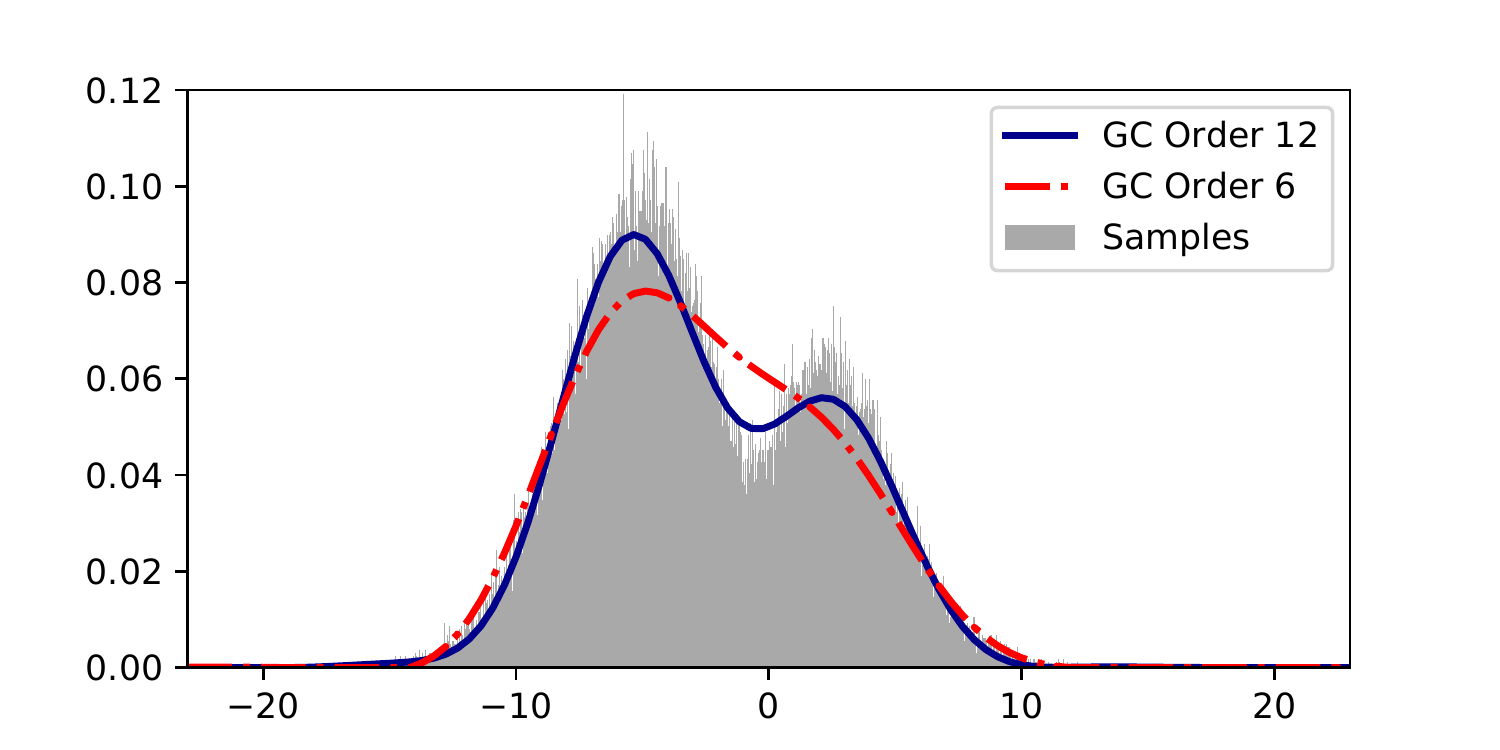}
    \end{subfigure}
    \begin{subfigure}{0.48\linewidth}
        \centering
        \includegraphics[trim={0.5cm, 0, 0.5cm, 0.5cm},clip,width=0.9\linewidth]{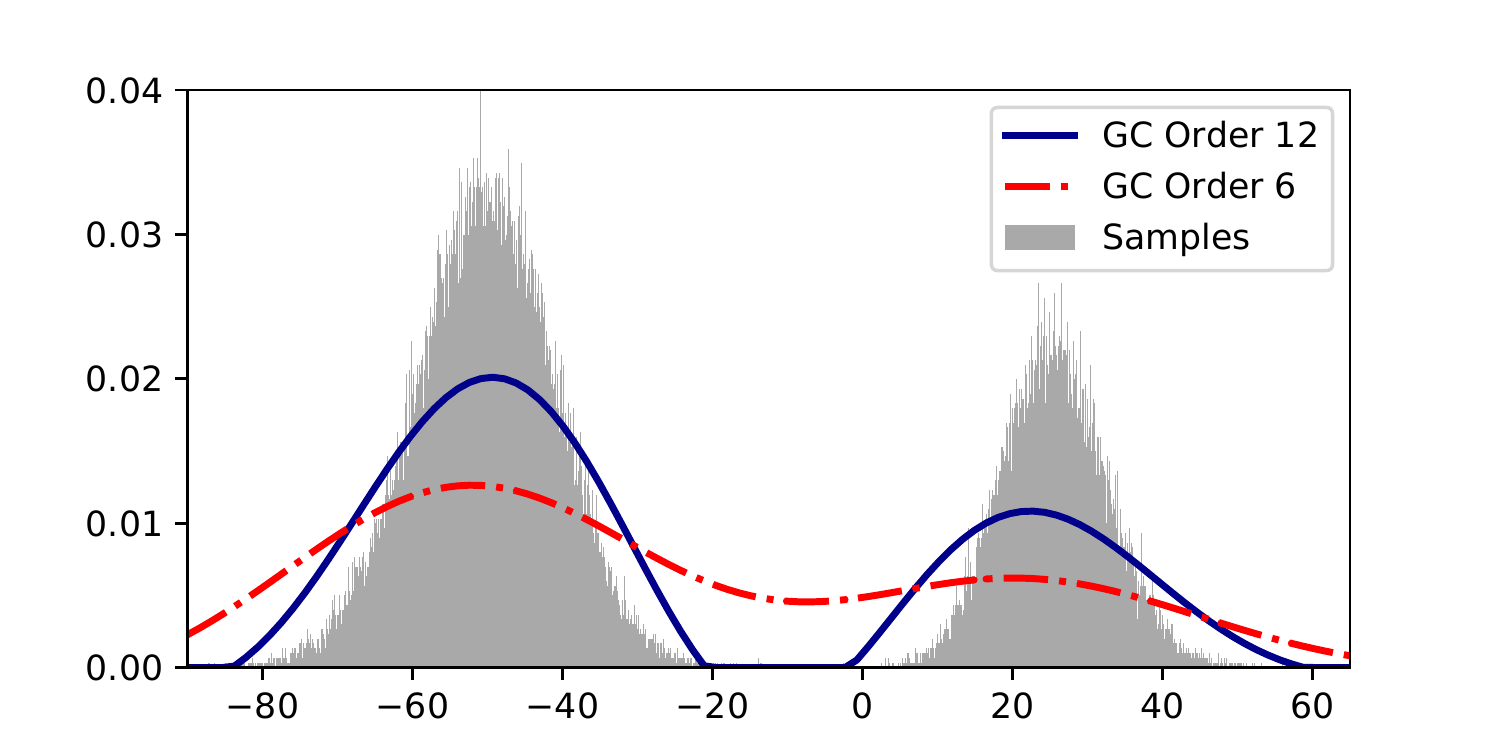}
    \end{subfigure}
    \caption{
    The empirical density of program variable $x$ for the benchmark \emph{Bimodal} (cf. Table~\ref{tbl:higher-moments}) and loop iterations $10$ (left) and~$100$ (right) obtained by~$10^5$ samples, together with two approximations using the Gram-Charlier A Series with $6$ (red dashed lines) and $12$ (blue solid lines) moments.
    }
    \label{fig:gram-charlier}
\end{figure}

The distributions of program variables with potentially infinitely many values, including continuous variables, cannot be, in general, fully reconstructed from finitely many moments.
However, expansions such as the \emph{Gram–Charlier A Series}~\cite{Kolassa2013} can be used to approximate a probability density function using finitely many moments.
Figure~\ref{fig:gram-charlier} illustrates how exact moments computed by our approach can be used to approximate unknown probability density functions of program variables.
While Figure~\ref{fig:gram-charlier} shows approximations for specific loop iterations, we emphasize that the symbolic nature of our approach allows for approximating the densities of program variables for all -- potentially infinitely many -- loop iterations simultaneously.
Therefore, our technique is constant in the number of loop iterations, whereas sampling has linear complexity.
We compute the approximations from Figure~\ref{fig:gram-charlier} for all infinitely many loop iterations in $\sim$$22$ seconds, with the experimental setup from Section~\ref{sec:evaluation}.
In comparison, sampling the loop $10^5$ times takes $\sim$$3.6$ minutes for loop iteration $10$ and $\sim$$33$ minutes for loop iteration $100$.

\pagebreak
\section{Implementation and Evaluation}
\label{sec:evaluation}

\paragraph{Implementation.} The program transformations (Section \ref{sec:model}) and (higher) moment computation (Section~\ref{sec:computing-moments}) are implemented in the new tool \Tool{}~\cite{PolarArtifact}.
The experiments can be reproduced using the corresponding artifact\footnote{\url{https://doi.org/10.5281/zenodo.7055030}}.
For automatically inferring finiteness of program variables, we use a standard approach based on \emph{abstract interpretation}.
\Tool{} is implemented in \texttt{python3}, consisting of $\sim$$3300$ LoC, and uses the packages \texttt{sympy}\footnote{\url{https://www.sympy.org}} and \texttt{symengine}\footnote{\url{https://github.com/symengine}} for symbolic manipulation of mathematical expressions.
Together with all our benchmarks, \Tool{} is publicly available at \url{https://github.com/probing-lab/polar}.

\paragraph{Experimental Setting and Evaluation.} The evaluation of our work is split into three parts.
First, we evaluate \Tool{} on the ability of computing higher moments for $15$ probabilistic programs exhibiting different characteristics (Section~\ref{sec:exp:higherMoment}). 
Second, we compare \Tool{} to the exact tool \Mora{}~\cite{Bartocci2020a} which computes so-called \emph{moment-based invariants} for a subset of our programming model (Section~\ref{sec:comp:exactmethods}).
Third, we compare our tool to approximate methods estimating program variable moments by confidence intervals through sampling (Section~\ref{sec:sampling}).
All experiments have been run on a machine with a \SI{2.6}{GHz} Intel i7 (Gen 10) processor and \SI{32}{GB} of RAM.
Runtime measurements are averaged over $10$ executions.

\begin{table}[h]
    \footnotesize
    \caption{Evaluation of \Tool{} on $15$ benchmarks. All times are in seconds. \#V = number of variables in benchmark; C = benchmark contains circular dependencies; If = benchmark contains if-statements; S = benchmark contains symbolic constants; INF = benchmark's states space is infinite; CONT = benchmark's state space is continuous; Moment = Moment to compute; RT = Total runtime.}
	\label{tbl:higher-moments}
    
	\begin{tabular}{lcccccc@{\hskip 3em}cccc}
	    \toprule
		Benchmark & \#V & C/If/S/INF/CONT & Moment & RT \\
		\midrule
		Running-Example (Fig.~\ref{fig:running-example}) & 7 & \cmark / \cmark / \cmark / \cmark / \cmark & $\E(z)$ & 0.67 \\
		\midrule
		Herman-3 & 10 & \cmark / \cmark / \xmark / \xmark / \xmark & $\E(\text{tokens}^3)$ & 0.58 \\
		\midrule
		Las-Vegas-Search & 3 & \xmark / \cmark / \xmark / \cmark / \xmark & $\E(\text{found}^{20})$ & 0.36 \\
		\midrule
		Pi-Approximation & 4 & \xmark / \cmark / \xmark / \cmark / \cmark & $\E(\text{count}^3)$ & 0.47 \\
		\midrule
		50-Coin-Flips & 101 & \xmark / \cmark / \xmark / \xmark / \xmark & $\E(\text{total})$ & 0.91 \\
		\midrule
		Gambler-Ruin-Momentum & 4 & \cmark / \xmark / \cmark / \cmark / \xmark & $\E(x^3)$ & 2.89 \\
		\midrule
		Hawk-Dove-Symbolic & 5 & \xmark / \cmark / \cmark / \cmark / \xmark & $\E(\text{p1bal}^4)$ & 2.00 \\
		\midrule
		Variable-Swap & 4 & \cmark / \xmark / \xmark / \cmark / \cmark & $\E(x^{30})$ & 2.42 \\
		\midrule
		Retransmission-Protocol & 4 & \xmark / \cmark / \cmark / \cmark / \xmark & $\E(\text{fail}^3)$ &  1.62 \\
		\midrule
		Randomized-Response & 7 & \xmark / \cmark / \cmark / \cmark / \xmark & $\E(\text{p1}^3)$ &  0.59 \\
		\midrule
		Duelling-Cowboys & 4 & \cmark / \cmark / \cmark / \xmark / \xmark & $\E(\text{ahit})$ & 1.14 \\
		\midrule
		Martingale-Bet & 4 & \xmark / \cmark / \cmark / \cmark / \xmark & $\E(\text{capital}^3)$ & 8.44 \\
		\midrule
		Bimodal & 5 & \xmark / \cmark / \xmark / \cmark / \cmark & $\E(\text{x}^{10})$ & 4.50 \\
		\midrule
		DBN-Umbrella & 2 & \xmark / \cmark / \cmark / \xmark / \xmark & $\E(\text{umbrella}^5)$ & 0.77 \\
		\midrule
		DBN-Component-Health & 3 & \xmark / \cmark / \xmark / \xmark / \xmark & $\E(\text{obs}^5)$ & 0.26 \\
		\bottomrule
	\end{tabular}
\end{table}

\subsection{Experimental Results with Higher Moments}\label{sec:exp:higherMoment}

Table~\ref{tbl:higher-moments} shows the evaluation of \Tool{} on the program from Figure~\ref{fig:running-example} and $14$ benchmarks which are either
from the literature on probabilistic programming \cite{Kwiatkowska2012} (\emph{Herman-3}), \cite{McIver2005} (\emph{Duelling-Cowboys}), \cite{Gretz2013} (\emph{Martingale-Bet}), \cite{Chakarov2014} (\emph{Hawk-Dove-Symbolic}, \emph{Variable-Swap}), \cite{Barthe2016} (\emph{Gambler-Ruin-Momentum}), \cite{Batz2021} (\emph{Retransmission-Protocol}), differential privacy schemes~\cite{Warner1965} (\emph{Randomized-Response}),
Dynamic Bayesian Networks (\emph{DBN-Umbrella}, \emph{DBN-Component-Health}) or
well-known stochastic processes (\emph{Las-Vegas-Search}, \emph{Pi-Approximation}, \emph{Bimodal}).
The benchmarks \emph{Retransmission-Protocol} and \emph{Hawk-Dove-Symbolic} were further generalized from their original definition by replacing concrete numbers with symbolic constants.
This makes these benchmarks only harder as solutions to the generalized versions are solutions for the concretizations.
Table~\ref{tbl:higher-moments} illustrates that \Tool{} can compute higher moments for various probabilistic programs exhibiting different features, like circular variable dependencies, if-statements, and symbolic constants with finite, infinite, continuous, and discrete state spaces.
Moreover, the table shows that the number of program variables is \emph{not} the primary factor for the complexity of computing moments.
For instance, the benchmarks \emph{50-Coin-Flips} and \emph{Duelling-Cowboys} have $101$ and $4$ program variables respectively.
Nevertheless, the runtimes for computing first moments for the two benchmarks only differ by \SI{0.23}{s}.
The complexity of computing moments lies in the complexity of the resulting systems of recurrences which depend on the concrete features present in the benchmarks like specific variable dependencies, symbolic constants, or degrees of polynomials.

\subsection{Experimental Comparison to Exact Methods}\label{sec:comp:exactmethods}

\begin{table}
    \footnotesize
    \caption{Comparison of \Tool{} to \Mora{}. The runtimes are in seconds per tool, benchmark, and moment. For \Tool{} the comparison contains in brackets the seconds spent on parsing, normalizing, and type inference.}
	\label{tbl:mora}
    
    \begin{minipage}[t]{0.5\textwidth}
    \vspace{0pt}
	\begin{tabular}{lcc}
		\toprule
		Benchmark & \Mora & \Tool \\
		\midrule
		
		COUPON
		\hfill
		\begin{tabular}{l}
            $\E(c)$ \\ $\E(c^2)$ \\ $\E(c^3)$
		\end{tabular}
		&
		\setlength\tabcolsep{0pt}
		\begin{tabular}{l}
            $0.25$ \\ $0.27$ \\ $0.29$
		\end{tabular}
		&
		\setlength\tabcolsep{0pt}
		\begin{tabular}{l}
            $0.29$ \\ $0.29$ \\ $0.29$
		\end{tabular}
		$\color{darkgray}(0.07)$
		\\
		\midrule
		
		COUPON4
		\hfill
		\begin{tabular}{l}
            $\E(c)$ \\ $\E(c^2)$ \\ $\E(c^3)$
		\end{tabular}
		&
		\setlength\tabcolsep{0pt}
		\begin{tabular}{l}
            $0.71$ \\ $0.87$ \\ $1.22$
		\end{tabular}
		& 
		\setlength\tabcolsep{0pt}
		\begin{tabular}{l}
            $0.36$ \\ $0.36$ \\ $0.36$
		\end{tabular}
		$\color{darkgray}(0.08)$
		\\
		\midrule
		
		RANDOM\_WALK\_1D
		\hfill
		\begin{tabular}{l}
            $\E(x)$ \\ $\E(x^2)$ \\ $\E(x^3)$
		\end{tabular}
		&
		\setlength\tabcolsep{0pt}
		\begin{tabular}{l}
            $0.07$ \\ $0.11$ \\ $0.10$
		\end{tabular}
		& 
		\setlength\tabcolsep{0pt}
		\begin{tabular}{l}
            $0.12$ \\ $0.23$ \\ $0.24$
		\end{tabular}
		$\color{darkgray}(0.07)$
		\\
		\midrule
		
		SUM\_RND\_SERIES
		\hfill
		\begin{tabular}{l}
            $\E(x)$ \\ $\E(x^2)$ \\ $\E(x^3)$
		\end{tabular}
		&
		\setlength\tabcolsep{0pt}
		\begin{tabular}{l}
            $0.27$ \\ $0.97$ \\ $2.48$
		\end{tabular}
		&
		\setlength\tabcolsep{0pt}
		\begin{tabular}{l}
            $0.27$ \\ $0.43$ \\ $0.79$
		\end{tabular}
		$\color{darkgray}(0.07)$
		\\
		\midrule
		
		PRODUCT\_DEP\_VAR
		\hfill
		\begin{tabular}{l}
            $\E(p)$ \\ $\E(p^2)$ \\ $\E(p^3)$
		\end{tabular}
		&
		\setlength\tabcolsep{0pt}
		\begin{tabular}{l}
            $0.37$ \\ $1.41$ \\ $4.03$
		\end{tabular}
		&
		\setlength\tabcolsep{0pt}
		\begin{tabular}{l}
            $0.28$ \\ $0.46$ \\ $0.97$
		\end{tabular}
		$\color{darkgray}(0.08)$
		\\
		\midrule
		
		RANDOM\_WALK\_2D
		\hfill
		\begin{tabular}{l}
            $\E(x)$ \\ $\E(x^2)$ \\ $\E(x^3)$
		\end{tabular}
		&
		\setlength\tabcolsep{0pt}
		\begin{tabular}{l}
            $0.10$ \\ $0.21$ \\ $0.17$
		\end{tabular}
		&
		\setlength\tabcolsep{0pt}
		\begin{tabular}{l}
            $0.12$ \\ $0.24$ \\ $0.24$
		\end{tabular}
		$\color{darkgray}(0.08)$
		\\
		\midrule
		
		BINOMIAL(p)
		\hfill
		\begin{tabular}{l}
            $\E(x)$ \\ $\E(x^2)$ \\ $\E(x^3)$
		\end{tabular}
		&
		\setlength\tabcolsep{0pt}
		\begin{tabular}{l}
            $0.12$ \\ $0.32$ \\ $0.79$
		\end{tabular}
		&
		\setlength\tabcolsep{0pt}
		\begin{tabular}{l}
            $0.25$ \\ $0.29$ \\ $0.44$
		\end{tabular}
		$\color{darkgray}(0.07)$
		\\
		\bottomrule
	\end{tabular}
	
	\end{minipage}\hfill
	\begin{minipage}[t]{0.5\textwidth}
	\vspace{0pt}
	\begin{tabular}{lcc}
		\toprule
		Benchmark & \Mora & \Tool \\
		\midrule
		
		STUTTERING\_A
		\hfill
		\begin{tabular}{l}
            $\E(s)$ \\ $\E(s^2)$ \\ $\E(s^3)$
		\end{tabular}
		&
		\setlength\tabcolsep{0pt}
		\begin{tabular}{l}
            $0.29$ \\ $1.13$ \\ $3.32$
		\end{tabular}
		&
		\setlength\tabcolsep{0pt}
		\begin{tabular}{l}
            $0.26$ \\ $0.43$ \\ $0.97$
		\end{tabular}
		$\color{darkgray}(0.07)$
		\\
		\midrule
		
		STUTTERING\_B
		\hfill
		\begin{tabular}{l}
            $\E(s)$ \\ $\E(s^2)$ \\ $\E(s^3)$
		\end{tabular}
		&
		\setlength\tabcolsep{0pt}
		\begin{tabular}{l}
            $0.26$ \\ $0.94$ \\ $2.26$
		\end{tabular}
		&
		\setlength\tabcolsep{0pt}
		\begin{tabular}{l}
            $0.27$ \\ $0.39$ \\ $0.79$
		\end{tabular}
		$\color{darkgray}(0.07)$
		\\
		\midrule
		
		STUTTERING\_C
		\hfill
		\begin{tabular}{l}
            $\E(s)$ \\ $\E(s^2)$ \\ $\E(s^3)$
		\end{tabular}
		&
		\setlength\tabcolsep{0pt}
		\begin{tabular}{l}
            $0.76$ \\ $12.43$ \\ $74.83$
		\end{tabular}
		&
		\setlength\tabcolsep{0pt}
		\begin{tabular}{l}
            $0.40$ \\ $1.94$ \\ $8.19$
		\end{tabular}
		$\color{darkgray}(0.08)$
		\\
		\midrule
		
		STUTTERING\_D
		\hfill
		\begin{tabular}{l}
            $\E(s)$ \\ $\E(s^2)$ \\ $\E(s^3)$
		\end{tabular}
		&
		\setlength\tabcolsep{0pt}
		\begin{tabular}{l}
            $0.76$ \\ $8.19$ \\ $25.67$
		\end{tabular}
		&
		\setlength\tabcolsep{0pt}
		\begin{tabular}{l}
            $0.43$ \\ $1.34$ \\ $4.33$
		\end{tabular}
		$\color{darkgray}(0.07)$
		\\
		\midrule
		
		STUTTERING\_P
		\hfill
		\begin{tabular}{l}
            $\E(s)$ \\ $\E(s^2)$ \\ $\E(s^3)$
		\end{tabular}
		&
		\setlength\tabcolsep{0pt}
		\begin{tabular}{l}
            $0.26$ \\ $1.17$ \\ $3.53$
		\end{tabular}
		&
		\setlength\tabcolsep{0pt}
		\begin{tabular}{l}
            $0.28$ \\ $0.49$ \\ $1.17$
		\end{tabular}
		$\color{darkgray}(0.07)$
		\\
		\midrule
		
		SQUARE
		\hfill
		\begin{tabular}{l}
            $\E(y)$ \\ $\E(y^2)$ \\ $\E(y^3)$
		\end{tabular}
		&
		\setlength\tabcolsep{0pt}
		\begin{tabular}{l}
            $0.30$ \\ $0.88$ \\ $1.98$
		\end{tabular}
		&
		\setlength\tabcolsep{0pt}
		\begin{tabular}{l}
            $0.29$ \\ $0.45$ \\ $0.65$
		\end{tabular}
		$\color{darkgray}(0.07)$
		\\
		\bottomrule
	\end{tabular}
	
	\end{minipage}
\end{table}

To the best of our knowledge, \Mora{} is the only other tool capable of computing higher moments for variables of probabilistic loops without templates -- as described in~\cite{Bartocci2019}.
\Mora{} operates on so-called \emph{Prob-solvable loops} which form a strict subset of our program model (Section~\ref{sec:model}).
Prob-solvable loops do not admit circular variable dependencies, if-statements, or state-dependent distribution parameters.
We compare \Tool{} against \Mora{} on the \Mora{} benchmarks taken from \cite{Kura2019,Chen2015,Chakarov2014,Katoen2010}.
Details can be found in Table~\ref{tbl:mora}. 
The experiments illustrate that \Tool{} can handle all programs and moments that \Mora{} can.
\Mora{}, however, cannot compute any moment for any program in Table~\ref{tbl:higher-moments}.
On simple benchmarks, \Tool{} is slightly slower than \Mora{} due to the constant overhead of the program transformations and type inference to identify finite valued variables.
On complex benchmarks \Tool{} provides a significant speedup compared to \Mora{}.
For instance, for the \emph{STUTTERING_C} benchmark \Tool{} computes the moment $\E(s^3)$ in about $8$ seconds, whereas \Mora{} needs over one minute.

\begin{table}
    \bigskip
    \bigskip
    \footnotesize
    \caption{Comparison of \Tool{} to approximation through sampling. \Tool{} = the tools runtime to compute the precise moment; CI N = an approximated 0.95-CI-interval from N samples; $T_{100.000}$ = the runtime for CI 100.000. The symbolic constant $p$ in \emph{Retransmission-Protocol} is set to $0.9$ and for \emph{Hawk-Dove-Symbolic} we set $v=4$ and $c=8$.}
	\label{tbl:simulation}
	\begin{tabular}{lccccccc}
	    \toprule
		Benchmark &
		Moment &
		\begin{tabular}{l}
            CI 100 \\
    		CI 1.000 \\
    		CI 100.000
		\end{tabular} &
		$T_{100.000}$ &
		$\Tool$ \\
		\midrule
		Running-Example (Fig.~\ref{fig:running-example}) &
		$\E(z_{10})$ &
		\begin{tabular}{l}
            $({-}55.6, 6.55)$ \\
            $({-}58.5, {-}39.3)$ \\
            $({-}45.6, {-}43.7)$
		\end{tabular} &
		545.6s &
		0.67s
		\\
		\midrule
		Retransmission-Protocol &
		$\E(\text{fail}_{10})$ &
		\begin{tabular}{l}
            $(0.09, 0.25)$ \\
    		$(0.11, 0.16)$ \\
    		$(0.109, 0.114)$
		\end{tabular} &
		146.2s &
		0.30s
		\\
		\midrule
		Variable-Swap &
		$\E(y_{10})$ &
		\begin{tabular}{l}
            $(4.47, 5.75)$ \\
    		$(5.06, 5.45)$ \\
    		$(5.50, 5.54)$ \\
		\end{tabular} &
		245.8s &
		0.13s
		\\
		\midrule
		Hawk-Dove-Symbolic &
		$\E(p1bal_{10})$ &
		\begin{tabular}{l}
            $(8.28, 12.3)$ \\
    		$(9.07, 10.5)$ \\
    		$(9.86, 10.0)$
		\end{tabular} &
		347.4s &
		0.27s
		\\
		\bottomrule
	\end{tabular}
\end{table}

\subsection{Experimental Comparison with Sampling}\label{sec:sampling}

For a probabilistic loop with program variable $x$ the moment $\E(x^k_n)$ can be approximated for fixed $k$ and $n$ by sampling $x^k_n$ and calculating the sample average or confidence intervals.
Table~\ref{tbl:simulation} compares \Tool{} to computing confidence intervals by sampling for $k=1$ and $n=10$.
The table shows that our tool is able to compute precise moments in a fraction of the time needed to sample programs to achieve satisfactory confidence intervals.
An advantage of sampling is that it is applicable for any probabilistic loop.
However, by its nature, sampling fails to give any formal guarantees or hard bounds on the approximated moments.
This is critical if the loop body contains branches that are executed with low probability.
If applicable, \Tool{} can provide \emph{exact} moments for symbolic $n$ (and involving other symbolic constants) faster than sampling can establish acceptable approximations.
Moreover, even if the sampling of the loops is sped up by using a more efficient implementation, \Tool{} enjoys complexity theoretical advantages.
The complexity of sampling is linear in both the number of samples and the number of loop iterations.
In contrast, \Tool{} does not need to take multiple samples for higher precision as it symbolically computes the exact moments.
Additionally, our method is \emph{constant} in the number of loop iterations.
With \Tool{}, computing the moment for a specific loop iteration, say $10^5$, just amounts to evaluate the closed-form at $10^5$.

Figure~\ref{fig:simulation} illustrates the importance of higher moments for probabilistic loops.
The first moment provides a center of mass but contains no information on how the mass is distributed around this center.
For this purpose higher moments are essential.

\begin{figure}
    \centering
    \begin{subfigure}[t]{0.32\textwidth}
        \centering
        \includegraphics[trim={0.5cm, 0, 1.6cm, 1cm}, clip, width=\linewidth]{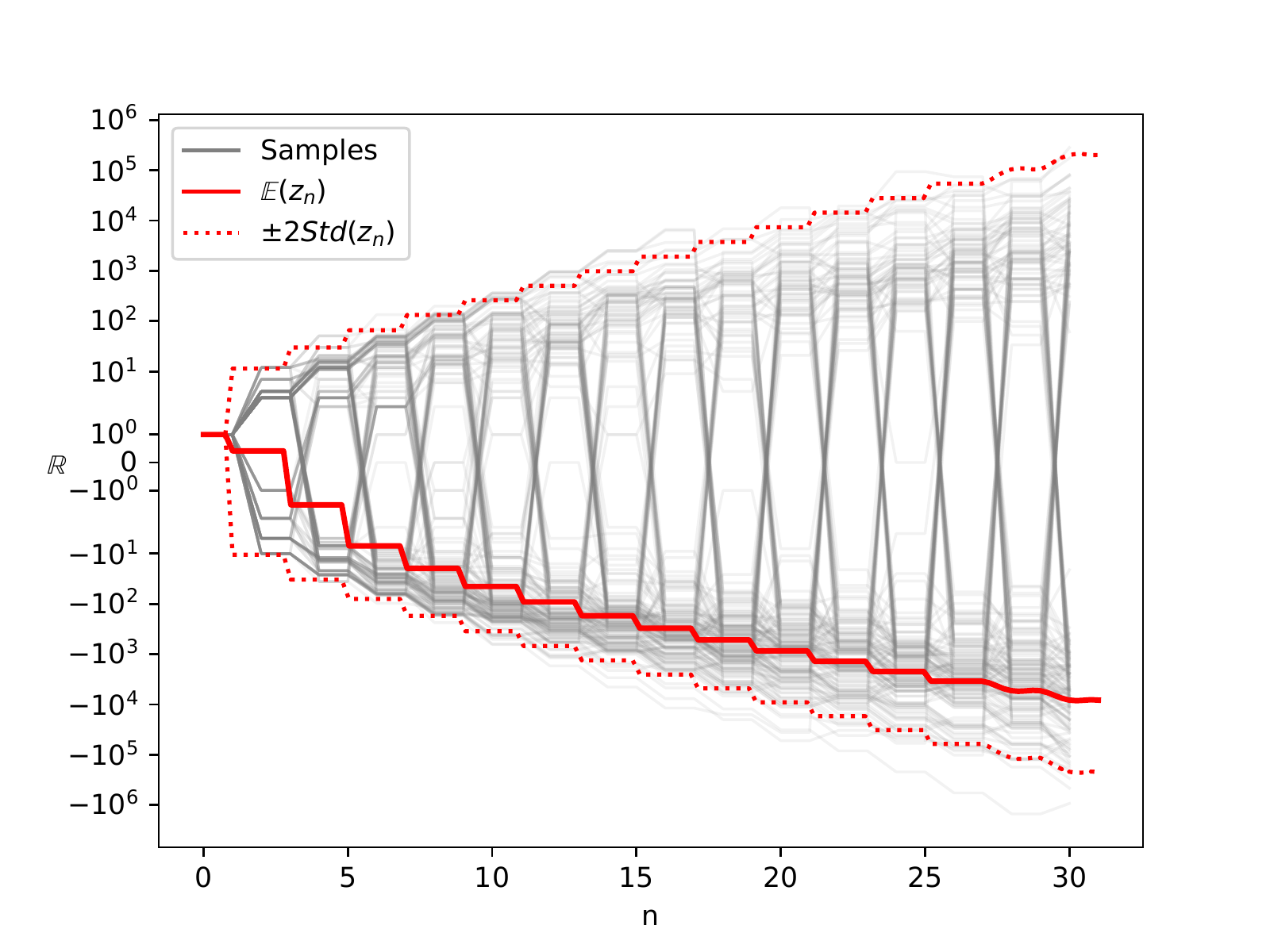}
        \caption{Running-Example (Fig.~\ref{fig:running-example})}
        \label{subfig:running}
    \end{subfigure}
    \begin{subfigure}[t]{0.32\textwidth}
        \centering
        \includegraphics[trim={0.5cm, 0, 1.6cm, 1cm}, clip, width=\linewidth]{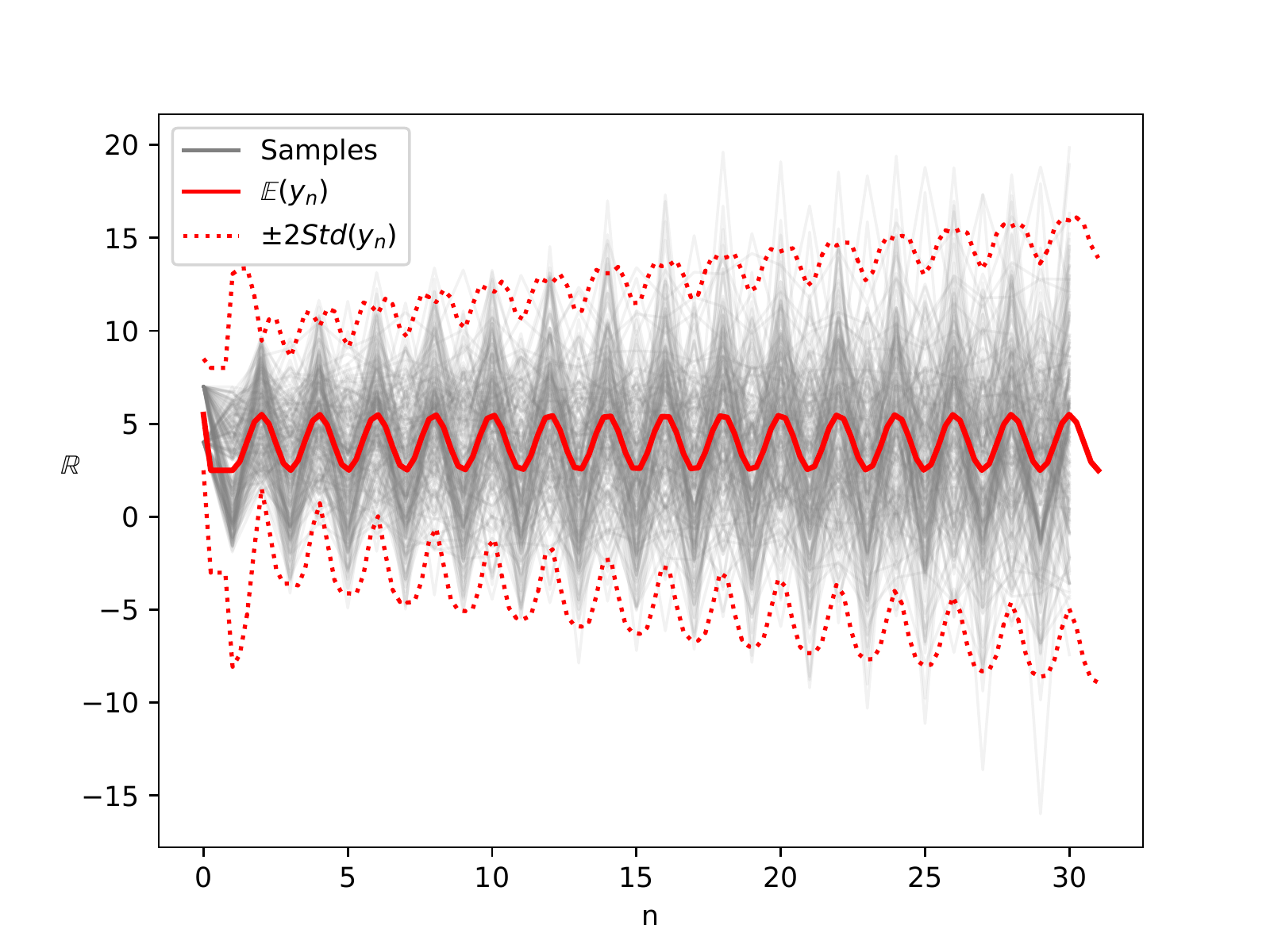}
        \caption{Variable-Swap}
    \end{subfigure}
    \begin{subfigure}[t]{0.32\textwidth}
        \centering
        \includegraphics[trim={0.5cm, 0, 1.6cm, 1cm}, clip, width=\linewidth]{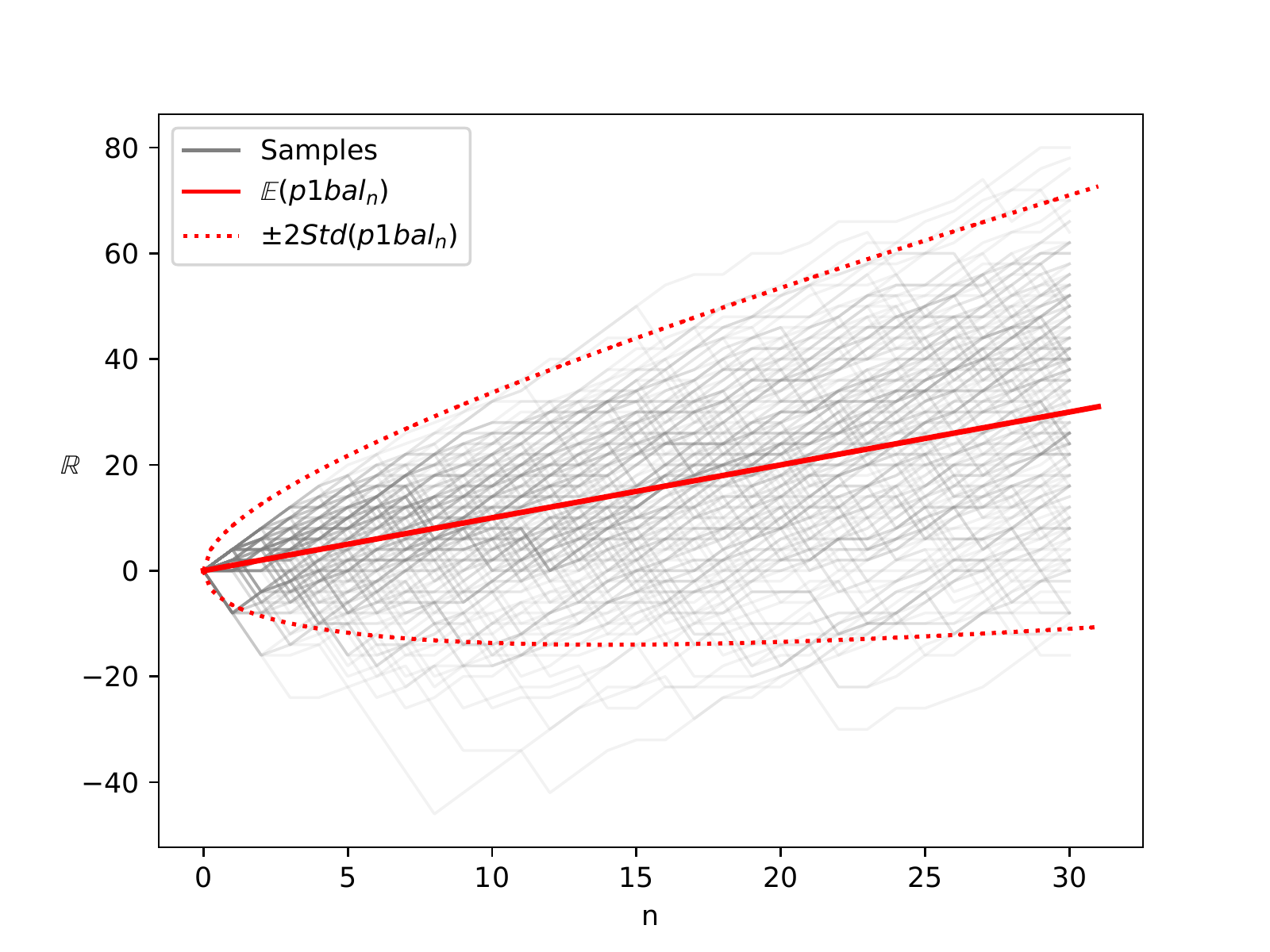}
        \caption{Hawk-Dove-Symbolic}
    \end{subfigure}
    \caption{Samples obtained by simulation plotted together with precise moments computed by \Tool{}. In each benchmark, the thin gray lines are $200$ samples over $30$ iterations. The thick red line is the precise expected value. The dotted red lines are the expected values $\pm$ twice the standard deviation given by the precise first two moments. Figure~\ref{subfig:running} is symmetric log scale.}
    \label{fig:simulation}
\end{figure}

\subsection{Evaluation Summary}
Our experimental evaluation demonstrates that:
(1) \Tool{} can compute higher moments for a rich class of probabilistic loops with various characteristics,
(2) \Tool{} outperforms the state-of-the-art of moment computation for probabilistic loops in terms of supported programs and efficiency, and
(3) \Tool{} computes exact moments magnitudes faster than sampling can establish reasonable approximations.

\section{Related Work}\label{sec:related}

Using recurrence equations to extract closed-forms for variables and quantitative invariants of loops is a well-studied technique for non-probabilistic programs~\cite{Farzan2015, Breck2020, Kincaid2019, Kincaid2018, Oliveira2016, Humenberger2017, Humenberger2018, Kovacs2008, Rodriguez-carbonell2004}.
Because a classical program is a special case of a probabilistic program, our technique presented in Section~\ref{sec:computing-moments} is a generalization of the closed-form computation for classical programs to probabilistic programs.
Moreover, the generalization to probabilistic programs is not trivial. One reason for this is that for classical programs the closed form for $x^p$ is just the closed-form for x to the power $p$. However, this fails for moments, as in general $E(x^p)$ is not equal to $E(x)^p$.

A common approach to quantitatively and exactly analyze probabilistic programs is to employ probabilistic model checking techniques~\cite{Baier2008, Kwiatkowska2011, Dehnert2017, Katoen2011, Holtzen2021}.

Exact inference for computing precise posterior distributions for probabilistic programs has been studied in \cite{Gehr2016, Holtzen2020, Narayanan2016, Claret2013, Saad2021}.
An interesting direction for future research is using our techniques to assist probabilistic inference in the presence of loops.

A different  approach to characterize the distributions of program variables are statistical methods such as Monte Carlo and hypothesis 
testing~\cite{Younes2006}.
Simulations are however performed on a chosen finite number of program steps and do not provide guarantees over a potentially infinite execution, such as unbounded loops, limiting thus their use (if at all) for invariant generation.

In~\cite{McIver2005}, a deductive approach, the \emph{weakest pre-expectation calculus}, for reasoning about PPs with discrete program variables is introduced.
Based on the weakest pre-expectation calculus, \cite{Katoen2010} presents the first template-based approach for generating linear quantitative invariants for PPs.
Other works~\cite{Feng2017,Chen2015} also address the synthesis of non-linear invariants or employ \emph{martingale} expressions~\cite{Barthe2016}.
All of these works target a slightly different problem and, unlike our approach, rely on templates.
The first data-driven technique for invariant generation for PPs is presented in \cite{Bao2021}.

Another line of related work comes with computing bounds over expected values~\cite{Bouissou2016,Karp1994,Chatterjee2020} and higher moments~\cite{Kura2019,Wang2021}.
The approach in~\cite{Bouissou2016} can provide bounds for higher moments and can handle non-linear terms at the price of producing more conservative bounds.
In contrast, our approach natively supports probabilistic polynomial assignments and provides a precise symbolic expression for higher moments.

The technique presented in~\cite{Bartocci2019} automates the generation of so-called moment-based invariants for a subclass of PPs with polynomial probabilistic updates and sets the basis for fully automatic exact higher moment computation.
Relative to our approach, \cite{Bartocci2019} supports neither if-statements (thus also no guarded loops), state-dependent distribution parameters, nor circular variable dependencies.
Our work establishes stronger theoretical foundations.

\section{Conclusion}\label{sec:conclusion}

We describe a fully automated approach for inferring exact higher moments for program variables of a large class of probabilistic loops with complex control flow, polynomial assignments, symbolic constants, circular dependencies among variables, and potentially uncountable state spaces. Our work uses program transformations to normalize and simplify probabilistic programs while preserving the joint distribution of program variables. We propose a power reduction technique for finite program variables to ease the complex polynomial arithmetic of probabilistic programs. We prove soundness and completeness of our approach, by establishing the theory of moment-computable probabilistic loops. We demonstrate use cases of exact higher moments in the context of computing tail probabilities and recovering distributions from moments.  Our experimental evaluation illustrates the applicability of our work, solving several examples whose automation so far was not yet supported by the state-of-the-art in probabilistic program analysis.

\section*{Data Availability Statement}
The tool \Tool{} together with all benchmarks and scripts necessary to reproduce the results reported in this paper are available through an openly accessible artifact~\cite{PolarArtifact}.

\begin{acks}
  This research was supported by the WWTF ICT19-018 grant ProbInG, the ERC Consolidator Grant ARTIST 101002685, the Austrian FWF project W1255-N23, and the SecInt Doctoral College funded by TU Wien.
  We thank the anonymous reviewers for their outstanding and detailed feedback.
\end{acks}

\newpage
\bibliography{references}

\end{document}